
\documentclass[preprint, 12pt]{elsarticle}



\usepackage{amssymb}


\usepackage{amssymb}

\usepackage{amsmath}

\usepackage{makecell}

\usepackage{bm} 

\usepackage{color}

\usepackage{enumerate}
\usepackage{amsthm}
\usepackage{xpatch}
\xpatchcmd{\proof}{\itshape}{\prooflabelfont}{}{}
\newcommand{\prooflabelfont}{\bfseries}







\graphicspath{{figure/}}                                 

\newtheorem{example}{Example}
\newtheorem{definition}{Definition}
\newtheorem{lemma}{Lemma}
\newtheorem{theorem}{Theorem}
\newtheorem{proposition}{Proposition}
\newtheorem{remark}{Remark}
\newtheorem{corollary}{Corollary}

\usepackage{hyperref}
\hypersetup{
	colorlinks=true,
	linkcolor=green,
	filecolor=green,      
	urlcolor=green,
	citecolor=green,
}

\DeclareUnicodeCharacter{2212}{\ensuremath{-}}

\DeclareMathOperator{\D}{d_{min}}

\DeclareMathOperator{\Wt}{wt}

\DeclareMathOperator{\Supp}{supp}


\journal{Finite Fields and Their Applications}

\begin{document}

\begin{frontmatter}



\title{New constructions of optimal $(r, \delta)$-LRCs via good polynomials}

\author{Yuan Gao}
\ead{gaoyuan862023@163.com}
\author{Siman Yang\corref{cor}}
\ead{smyang@math.ecnu.edu.cn}
\address{School of Mathematical Sciences,  Key Laboratory of MEA (Ministry of Education) \emph{\&} Shanghai Key Laboratory of PMMP,  East China Normal University, Shanghai 200241, China}
\cortext[cor]{Corresponding author}
%

%

\begin{abstract}
Locally repairable codes (LRCs) are a class of erasure codes that are widely used in distributed storage systems, which allow for efficient recovery of data in the case of node failures or data loss. 
In 2014, Tamo and Barg introduced Reed-Solomon-like (RS-like) Singleton-optimal $(r,\delta)$-LRCs based on polynomial evaluation. These constructions rely on the existence of so-called good polynomial that is constant on each of some pairwise disjoint subsets of $\mathbb{F}_q$.
In this paper, we extend the aforementioned constructions of RS-like LRCs and proposed new constructions of $(r,\delta)$-LRCs whose code length can be larger. These new $(r,\delta)$-LRCs are all distance-optimal, namely, they attain an upper bound on the minimum distance, that will be established in this paper. This bound is sharper than the Singleton-type bound in some cases owing to the extra conditions, it coincides with the Singleton-type bound for certain cases.
Combing these constructions with known explicit good polynomials of special forms, we can get various explicit Singleton-optimal $(r,\delta)$-LRCs with new parameters, whose code lengths are all larger than that constructed by the RS-like $(r,\delta)$-LRCs introduced by Tamo and Barg.
Note that the code length of classical RS codes and RS-like LRCs are both bounded by the field size. We explicitly construct the Singleton-optimal $(r,\delta)$-LRCs with length $n=q-1+\delta$ for any positive integers $r,\delta\geq 2$ and $(r+\delta-1)\mid (q-1)$. When $\delta$ is proportional to $q$, it is asymptotically longer than that constructed via elliptic curves whose length is at most $q+2\sqrt{q}$. Besides, it allows more flexibility on the values of $r$ and $\delta$.
\end{abstract}

%

\begin{keyword}
	Locally repairable codes \sep Singleton-type bound \sep Good polynomials \sep Polynomial evaluation \sep Distributed storage systems
\end{keyword}

\end{frontmatter}


\section{Introduction}\label{sec:1Intro}
In the era of big data, distributed storage systems are crucial for the storage of huge amount of data. 
Compared with centralized storage, distributed storage has obvious advantages in improving storage reliability and reducing storage load. In distributed storage systems, the raw data is encoded and stored in different
data storage devices. Locally repairable codes is a class of erasure codes that can reduce the repair cost in distributed storage systems. It was first introduced by Gopalan \textit{et al.} \cite{gopalan2012locality} in 2012.

An $[n,k,d]_q$ linear code $\mathcal{C}$ is called an LRC with locality $r$ or an $r$-LRC, if for any $i\in\{1,2,\dots,n\}$, there exists a subset $R_i\subseteq \{1,2,\dots,n\}$ containing $i$, such that $|R_i|\leq r+1$ and for any codeword $(c_1,c_2,\dots,c_n)\in \mathcal{C}$, $c_i$ can be determined by the other $(|R_i|-1)$ code symbols $c_i,i\in R_i\backslash \{i\}$. For classical codes, there exists a trade-off $k+d\leq n+1$ between dimension $k$ and minimum distance $d$ which is known as the Singleton bound. For $r$-LRCs, 
the Singleton-type bound \eqref{S-L, 1} was proposed in \cite{gopalan2012locality}.
\begin{eqnarray}\label{S-L, 1}
	d\leq n-k-\lceil\frac{k}{r}\rceil+2.
\end{eqnarray}
To
address the situation of multiple device failures, the notion of \emph{locality $r$} was generalized in \cite{prakash2012optimal}. An $[n,k,d]_q$ linear code $\mathcal{C}$ is called an $(r,\delta)$-LRC, where $\delta \geq2$, if for any $i\in \{1,2,\dots,n\}$, there exists a subset $R_i\subseteq \{1,2,\dots,n\}$ containing $i$, such that $|R_i|\leq r+\delta-1$ and any erasure of $(\delta-1)$ code symbols at $R_i$ can be recovered by accessing the other $(|R_i|-\delta+1)$ code symbols at $R_i$. 
The following upper bound on the minimum distance of an $(r,\delta)$-LRC was proposed in \cite{prakash2012optimal}.
\begin{equation}\label{S-L, 2}
	d\leq n-k-(\lceil\frac{k}{r}\rceil-1)(\delta-1)+1.
\end{equation}
It is easy to see that an $(r,2)$-LRC is an $r$-LRC and the bound \eqref{S-L, 2} degenerates to the bound \eqref{S-L, 1} when $\delta=2$. An $(r,\delta)$-LRC attaining the  bound \eqref{S-L, 2} is called a Singleton-optimal $(r,\delta)$-LRCs. 

Locally repairable codes have attracted the attention of many researchers in 
recent years. There have been many constructions of LRCs attaining these bounds, e.g., \cite{tamo2014family}, \cite{kolosov2018optimal}, \cite{rajput2020optimal}, \cite{li2018optimal}, \cite{jin2019construction}, \cite{jin2019explicit},  \cite{xi2022optimal}, \cite{ma2019optimal}, \cite{chen2017constructions}, \cite{chen2019constructions}, \cite{hao2017linear}. In particular, the well-known RS-like Singleton-optimal LRCs were constructed via polynomial evaluation in \cite{tamo2014family}, whose lengths are at most $q$. It is worth noting that these constructions only require $n,k,r$ to satisfy $(r+1)\mid n$ and $1\leq \frac{k}{r}\leq \frac{n}{r+1}$. In \cite{jin2019construction}, Jin \textit{et al.} generalized the constructions in \cite{tamo2014family}, constructing Singleton-optimal LRCs with code lengths up to $q+1$ and more flexible locality $r$ by employing automorphism groups of the rational function
fields. In \cite{chen2017constructions} and \cite{chen2019constructions}, Chen \textit{et al.} constructed Singleton-optimal $(r,\delta)$-LRCs with length $n$ and dimension $k$ for any $q$ satisfying $n\mid (q+1)$, $(r+\delta-1)\mid n$ and $2\leq r\leq k$ via cyclic and constacyclic codes. These LRCs are also of great value in application since some efficient decoding algorithms can be applied to them. In \cite{li2018optimal}, by using rich algebraic structures of elliptic curves, Singleton-optimal $r$-LRCs were constructed for small localities $r=2,3,5,7,11,23$ with length $n\leq q+2\sqrt{q}$.
By shortening codes from the RS-like LRCs constructed in \cite{tamo2014family}, Kolosov \textit{et al.} \cite{kolosov2018optimal} proposed the explicit constructions of distance-optimal $r$-LRCs with length $n\equiv 2,3,\dots,r\mod r+1$. For some dimension $k$, these LRCs are Singleton-optimal.
The constructions in \cite{tamo2014family} and \cite{kolosov2018optimal} are all based on the so-called good polynomial that is constant on each of some point sets with the same cardinality (see Definition \ref{good polynomial, defn2}). The code length is limited by the good polynomials. 
\subsection{Our results and comparison}	
Motivated by the constructions of LRCs in \cite{tamo2014family} and \cite{kolosov2018optimal}, we construct some distance-optimal $(r,\delta)$-LRCs with slightly larger code length than that constructed in \cite{tamo2014family} for some good polynomials of special forms. These $(r,\delta)$-LRCs are Singleton-optimal for some dimension $k$. More explicitly, we have constructions of $(r,\delta)$-LRCs for the following two classes of good polynomials $g(x)$.
In the following statements, $g(x)\in \mathbb{F}_q[x]$ is a good polynomial with degree $(r+\delta-1)$ that takes non-zero constants on exactly $L$ pairwise disjoint $(r+\delta-1)$-subsets, in which case the code length of the $(r,\delta)$-LRCs constructed by Tamo and Barg in \cite{tamo2014family} is at most $L(r+\delta-1)$.

A. Assume that $g(x)$ has $s\geq \delta$ distinct roots, then we can employ these $s$ roots as extra evaluation points to increase the code length by $s$, to $L(r+\delta-1)+s$. The maximal code length of its explicit constructions is $q$.

B. Assume that $g(x)$ can be factored into the product of two polynomials with degrees $r-1$ and $\delta$ respectively, then the code length can be increased by $\delta$, to $L(r+\delta-1)+\delta$. The maximal code length of its explicit constructions is $q-1+\delta$.

In Proposition \ref{prop:MiniDistanceUpperBound}, we prove an upper bound on the minimum distance that is sharper than the Singleton-type \eqref{S-L, 2} by requiring the local repair groups to have particular form.
It turns out that the above $(r,\delta)$-LRCs are all distance-optimal in the sense of attaining this bound. They are Singleton-optimal if $1\leq (k\mod r)\leq ((n+r)\mod (r+\delta-1))$, where $(k\mod r)$ denote the least non-negative remainder when $k$ is divided by $r$.

Our constructions rely on the existence of good polynomials of special forms. There have been many constructions of good polynomials, e.g., \cite{tamo2014family}, \cite{micheli2019constructions}, \cite{liu2020constructions}, \cite{dukes2022optimal}, \cite{chen2021good}, \cite{chen2022function}.
In Table \ref{tab:table1}, we list the references of some known explicit constructions of good polynomials and the code lengths that can be increased by our constructions compared to that in \cite{tamo2014family}. The details will be shown in Lemma \ref{lem:ConsBlem1}, \ref{lem:ConsBlem2}, \ref{lem:ConsClem1}, \ref{lem:ConsClem2} and Corollary \ref{cor:ConsBcor1}, \ref{cor:ConsBcor2}, \ref{cor:ConsCcor1}, \ref{cor:ConsCcor2}, respectively.
\setcounter{table}{0}
\renewcommand{\thetable}{\arabic{table}}
\begin{table}[tbhp] 
	\footnotesize
	\caption{Some known good polynomials and the code length increased by our constructions}\label{tab:table1}
	\begin{center}
		\begin{tabular}{|c|c|c|c|p{2.7cm}|}\hline
			\textbf{\makecell{Constructions of\\Good polynomials}}&\textbf{\makecell{Degree}}& \textbf{\makecell{The code length \\is increased by}}&\textbf{\makecell{Code length}}                                                          \\ \hline
			
			\cite[Theorem 3.3]{tamo2014family}              &    $m|H|$       &     $|H|$  &      $q$                       \\ \hline
			
			\cite[Theorem 19]{chen2022function}           &    $m$                  & $(m+1)/2$  & $m\lfloor q/2m\rfloor+(m+1)/2$ \\ \hline   
			
			\cite[Proposition 3.2]{tamo2014family}                     &   $|H|$        &     $\delta$      &   $q+\delta-1$                        \\ \hline
			
			\cite[Theorem 3]{chen2021good}                            &   $3$          &      $2$   &  $(q+5)/2$                     \\ \hline
		\end{tabular}
	\end{center}
\end{table}

In particular, by Corollary \ref{cor:ConsBcor1}, if $q=p^s,1\leq l\textless s$, $2\leq m|(p^l-1)$ and $l\mid s$, where $p$ is a prime, then there is a $[q,k,d]_q$ LRC with locality $r=mp^l-1$ by letting $H=\mathbb{F}_{p^l}$.
When $k\equiv 1,2,\dots,p^l-1\mod r$, it is a Singleton-optimal $r$-LRC with length $n=q$. By Corollary \ref{cor:ConsCcor1}, if $r,\delta\geq 2$ and $(r+\delta-1)|(q-1)$, then there exists a $[q+\delta-1,k,d]_q$ LRC with $(r,\delta)$-locality. When $k\equiv 1\mod r$, it is a Singleton-optimal $(r,\delta)$-LRC with length $n=q+\delta-1$, which is longer than that constructed via elliptic curves in \cite{li2018optimal} when $\delta$ is proportional to $q$. In Table \ref{tab:table2}, we list some known constructions of Singleton-optimal $r$-LRCs whose minimum distance can be proportional to the code length, where we assume $\delta=2$ for simplicity.
\begin{table}[tbhp] \scriptsize
	\caption{Some known constructions of Singleton-optimal LRCs with large minimum distance}\label{tab:table2}
	\begin{center}
		\begin{tabular}{|c|c|c|c|c|p{2.7cm}|}\hline
			\textbf{Length $n$}&\textbf{Locality $r$}&\textbf{Minimum distance $d$}&\textbf{References} \\ \hline
			
			$q$     & \makecell{$(r+1)=p^l$, \mbox{where} $1\leq l\leq \log_p(q)$,\\ $p$ is a prime.}&$d\leq n$ &   \cite{tamo2014family}  \cite{jin2019construction}     \\ \hline
			
			$q$    &\makecell{$(r+1)=mp^l$, \mbox{where} $1\leq l\textless \log_p(q),$\\ $l\mid \log_p(q)$, $2\leq m|(p^l-1)$.}&  $d\equiv 2,3,\dots,p^l\mod (r+1)$ & Corollary \ref{cor:ConsBcor1} \\ \hline
			
			$q+1$     &    $(r+1)|n$      &$d\leq n$ &     \cite{jin2019construction},\cite{chen2017constructions},\cite{chen2019constructions}     \\ \hline
			
			$q+1$    &$(r+1)|(q-1)$&  $d\equiv 2\mod (r+1)$  &Corollary \ref{cor:ConsCcor1} \\ \hline
			
			$q+2\sqrt{q}$  & $r=2,3,5,7,11,23$,$(r+1)|n$ &       $(r+1)|d$      &  \cite{li2018optimal}  \\ \hline   
			
			$\leq q+2\sqrt{q}$    &$2,3,4,6,8,12,24\mid (r+1)$,$(r+1)|n$  &  $(r+1)|d$     &\cite{ma2023group}  \\ \hline
		\end{tabular}
	\end{center}
\end{table}		

\smallskip
The rest of this paper is organized as follows. In  section \ref{sec:2Pre}, we review some definitions and necessary results. In section \ref{sec:3Cons}, we give our main constructions. In section \ref{sec:4Conc}, we conclude the paper.
\section{Preliminaries}\label{sec:2Pre}
Let us recall some notations and definitions.
\begin{itemize}
	\item Let $\lfloor a\rfloor,\lceil a \rceil$ be the floor function and the ceiling function of $a$, respectively.
	\item Let $\mathbb{F}_q$ be the finite field with $q$ elements, where $q$ is a prime power. 
	\item Let $\mathbb{F}_q^n$ be the $n$-dimensional vector space over $\mathbb{F}_q$.
	\item  For any $\boldsymbol{x}=(x_1,x_2,\dots,x_n)\in \mathbb{F}_q^n$, the support set of $\boldsymbol{x}$ is defined by $\Supp(\boldsymbol{x})=\{i\in [n]|x_i\neq 0\}$ and the Hamming weight of $\boldsymbol{x}$ is defined by $\Wt(\boldsymbol{x})=|\Supp(\boldsymbol{x})|$.
	\item An $[n,k]_q$ linear code $\mathcal{C}$ is a $k$-dimensional subspace of $\mathbb{F}_q^n\quad (1\leq k\leq n)$. The minimum distance of $\mathcal{C}$ is defined to be $\D(\mathcal{C})\triangleq \min\limits_{ \boldsymbol{0}\neq \boldsymbol{c}\in \mathcal{C} }\{\Wt(\boldsymbol{c})\}$. $\mathcal{C}$ is also called an  $[n,k,\D(\mathcal{C})]_q$ linear code.
	\item For any polynomial $f(x)$ over $\mathbb{F}_q$, we use $\deg(f)$ to denote the degree of $f$. In particular, the degree of the zero polynomial is specified to be $-\infty$.
	\item For any positive integers $a$ and $b$, we use $(a\mod b)$ to denote the least non-negative remainder when $a$ is divided by $b$.
	\item For any positive integer $n$, we use $[n]$ to denote the set $\{1,2,\dots,n\}$. In particular, $[0]\triangleq \varnothing$.
	\item For any finite set $A$, we call it an $n$-set if $|A|=n$. 
	
	\item Suppose  that $\mathcal{C}$ is a linear code over $\mathbb{F}_q$ with length $n$ and $S$ is a non-empty subset of $[n]$, punctured code at the set $[n]\backslash$ is denoted by $\mathcal{C}_{S}:=\{(c_i,i\in S)|(c_1,c_2,\dots,c_n)\in \mathcal{C}\}$.
\end{itemize}
\begin{definition}[Locally repairable codes]\label{defn1}
	Suppose that $\mathcal{C}$ is an $[n,k,d]_q$ linear code, if for any $i\in [n]$, there exists a subset $R_i\subseteq [n]$, such that one of the following statement hold:
	\begin{itemize}
		\item $i\in R_i$, $|R_i|\leq r+\delta-1
		$ and $\mathcal{C}_{R_i}=\{\boldsymbol{0}\}$,
		\item 
		$i\in R_i$, $|R_i|\leq r+\delta-1
		$ and $\D(\mathcal{C}_{R_i})\geq \delta$,
	\end{itemize}
	
	then we call $\mathcal{C}$ an LRC with $(r,\delta)$-locality or an $(r,\delta)$-LRC. 
	$R_i$ is called a local repair group of $\mathcal{C}$.
\end{definition}
This definition is equivalent to that described by the locally repair ability in section \ref{sec:1Intro}.
\begin{definition}[{{\cite[Sect. \uppercase\expandafter{\romannumeral3}-A]{tamo2014family}\label{good polynomial, defn2}}}]
	For a polynomial $g(x)\in \mathbb{F}_q[x]$ of degree $m\geq 2$, if there exist $L$ pairwise disjoint $m$-subsets $A_1,A_2,\dots,A_L\subseteq \mathbb{F}_q$ such that $g(x)$ is constant on each of these sets, then we call $g(x)$ a good polynomial.
\end{definition}
If a polynomial $g(x)$ is constant on the set $A$, then we use $g(A)$ to denote this constant for simplicity. Chen \textit{et al.} \cite{chen2021good} introduced some parameters that measure how "good" a polynomial is.  
\begin{definition}[{{\cite[Sect. 2.1]{chen2021good}}}]:\label{N_f, defn}
	For a polynomial $f(x)\in \mathbb{F}_q[x]$ with $\deg(f)\geq 2$,
	
	the map $N_f : \mathbb{N}\rightarrow \mathbb{N}$ is defined by:
	$$N_f(i)=|\left\{t\in \mathbb{F}_q : f(x)+t \quad\mbox{has exactly $i$ distinct roots in }\mathbb{F}_q \right\}|,$$
	
	the map
	$G: \mathbb{F}_q[x]\rightarrow \mathbb{N}$ is defined by:
	$$G(f)=N_f(\deg(f)).$$
\end{definition}

For a good polynomial $f(x)$ of degree $m\geq 2$, the code length of the RS-like LRCs constructed in \cite{tamo2014family} and \cite{kolosov2018optimal} is upper bounded by $mG(f)$. In Remark \ref{rem:ConsBrem} (3), we will show that the code length can be increased by $s$ if $N_f(s)\geq 1$, where $m\geq 3$ and $2 \leq s\leq m-1$.
\hspace*{\fill}

Assume $2\leq r\leq k$. In this paper, we denote $k'=k+v$, where $v\in [r-1]$. 
For $i\in\{0,1,\dots,r-1\}$, we define 
$$\xi(i)=\begin{cases}
	\lfloor\frac{k'}{r}\rfloor&i\text{ }\textless \text{ }(k' \mod r)\\
	\lfloor\frac{k'}{r}\rfloor-1&i\geq (k'\mod r).\\
\end{cases}$$		
Note that 
\begin{align*}
	&r-v+\lfloor\frac{k'}{r}\rfloor (k'\mod r)+(\lfloor\frac{k'}{r}\rfloor-1)(r-(k'\mod r))\\
	&=r-v+r\lfloor\frac{k'}{r}\rfloor-r+(k'\mod r)\\
	&=k'-v\\
	&=k.
\end{align*}

Hence the vector $\boldsymbol{I}\in \mathbb{F}_q^k$ can be written as 
\begin{equation}\label{Infromation, I, form}
	\boldsymbol{I}=(I_0,I_1,\dots,I_{r-v-1};{I_{i,j},i=0,1,\dots,r-1,j\in [\xi(i)]}).
\end{equation}
This expression will be used in our code constructions in section \ref{sec:3Cons}.

%
%

First we establish an improved bound which is sharper than Singleton-type bound \eqref{S-L, 2} in certain cases at the cost of some mild conditions. It is based on the similar ideas as that in \cite[Theorem III.3]{kolosov2018optimal} and \cite[Theorem 6]{rajput2020optimal}. The main difference is that this theorem is about $(r,\delta)$-LRCs rather than $(r,2)$-LRCs.
\begin{proposition}\label{prop:MiniDistanceUpperBound}
	Let $\mathcal{C}$ be an $[n,k,d]_q$ $(r,\delta)$-LRC. Suppose that $L\geq 1$ and $\mathcal{C}$ has $L+1$  disjoint local
	repair groups $A_i(i\in [L+1])$ such that $|A_i|=r+\delta-1,i=1,\dots,L$ and
	$|A_{L+1}|=r+\delta-1-v\triangleq s$, where $1\leq v\leq r-1$. 
	
	If $r\leq k\leq Lr+s-\delta+1$, then we have \
	\begin{equation}\label{eq:MiniDistanceUpperBound}
		d\leq n-k-(\lceil\frac{k+v}{r}\rceil-1)(\delta-1)+1.
	\end{equation}
\end{proposition}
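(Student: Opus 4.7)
The plan is to follow the classical shortening argument that yields the Singleton-type bound \eqref{S-L, 2} (as in \cite{prakash2012optimal} and \cite{rajput2020optimal}), but to exploit the ``short'' local repair group $A_{L+1}$ of size $s=r+\delta-1-v$ to gain a saving of $v$ inside the ceiling. The idea is to construct a coordinate set $S\subseteq[n]$ as large as possible while keeping the projection $\pi_S\colon\mathcal{C}\to\mathbb{F}_q^S$ non-injective; then some non-zero codeword has support in $[n]\setminus S$, giving $d\leq n-|S|$.

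Set $k'=k+v$ and $m=\lceil k'/r\rceil-1$. From $k\geq r$ and $v\geq 1$ I would get $m\geq 1$, and from the hypothesis $k\leq Lr+s-\delta+1=(L+1)r-v$ I would get $k'\leq (L+1)r$, hence $m\leq L$. Thus I can single out $m$ of the $L$ full groups, say $A_1,\ldots,A_m$, together with the short group $A_{L+1}$. I then assemble $S$ in three pieces: all of $A_{L+1}$, all of $A_1,\ldots,A_{m-1}$, and an arbitrary subset $J\subseteq A_m$ of size $k'-1-mr$. This cardinality is non-negative because $mr\leq k'-1$ by the definition of $m$, and is $\leq r-1<|A_m|$ so $J$ really fits into a single full group.

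For the rank estimate I would invoke the Singleton bound on each local code: $\pi_{A_i}(\mathcal{C})$ is either trivial or has minimum distance $\geq\delta$, so $\dim\pi_{A_i}(\mathcal{C})\leq|A_i|-(\delta-1)$, which equals $r$ for $i\in[L]$ and $r-v$ for $i=L+1$. Adding up the contributions of the three pieces gives
\begin{align*}
\dim\pi_S(\mathcal{C})&\leq (r-v)+(m-1)r+(k'-1-mr)=k-1<k,
\end{align*}
so $\ker\pi_S\neq\{\boldsymbol{0}\}$ and some non-zero codeword is supported in $[n]\setminus S$. By disjointness of the $A_i$,
\begin{align*}
|S|&=s+(m-1)(r+\delta-1)+(k'-1-mr)\\
&=k-1+m(\delta-1)=k-1+\bigl(\lceil(k+v)/r\rceil-1\bigr)(\delta-1),
\end{align*}
and the bound \eqref{eq:MiniDistanceUpperBound} then follows from $d\leq n-|S|$.

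The main obstacle is purely numerical bookkeeping: one must pick $m$ so that simultaneously (i) enough disjoint full groups are available, i.e.\ $m\leq L$, (ii) the partial piece $k'-1-mr$ is non-negative, and (iii) it still fits inside one single group $A_m$. All three requirements follow cleanly from the formula $m=\lceil(k+v)/r\rceil-1$ combined with the upper bound $k\leq Lr+s-\delta+1$ in the hypothesis, and no algebraic machinery beyond a careful setup is needed. The improvement over \eqref{S-L, 2} is harvested precisely from the fact that $A_{L+1}$ contributes $r-v$ to the rank budget while occupying only $s=r+\delta-1-v$ coordinates of $S$.
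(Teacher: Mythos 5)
Your argument is correct and is essentially the same shortening argument as the paper's: both proofs exhibit a set $S$ of size $k-1+(\lceil(k+v)/r\rceil-1)(\delta-1)$, contained in the union of the short group $A_{L+1}$ with appropriately many of the full groups, on which the punctured code has dimension at most $k-1$, and then conclude $d\le n-|S|$. The only difference is cosmetic: you bound $\dim\pi_S(\mathcal{C})$ by summing the local Singleton bounds $\dim\pi_{A_i}(\mathcal{C})\le|A_i|-(\delta-1)$, whereas the paper first fixes a $(k-1)$-set $A\subseteq S$ and uses the local recoverability directly to show $|\mathcal{C}_S|=|\mathcal{C}_A|\le q^{k-1}$; these two ways of charging $\delta-1$ per full local group are equivalent.
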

\begin{proof}
	For the $[n,k,d]_q$ linear code $\mathcal{C}$, we have \begin{equation}\label{minimum distance}
		d=n-\max\limits_{S\subseteq \{1,2,\dots,n\}}\left\{|S|\bigg||\mathcal{C}_{S}|\textless q^k\right\}.
	\end{equation}
	We will show that there exists a set $S\subseteq \{1,2,\dots,n\}$ such that $|S|\geq k-1+(\lceil\frac{k+v}{r}\rceil-1)(\delta-1) $ and $|\mathcal{C}_{S}|\textless q^k$.
	
	Note that $0\leq \frac{k-1-s+\delta-1}{r}\leq L$. Let $m=\lfloor\frac{k-1-s+\delta-1}{r}\rfloor$ and $t=\lceil\frac{k-1-s+\delta-1}{r}\rceil$, we now choose $k-1$ coordinate of code $\mathcal{C}$ by the following steps, more precisely, some subsets of $A_1,A_2,\dots,A_{L+1}$ whose union has $k-1$ elements:
	\begin{enumerate}
		\item Choose an arbitrary $(s-\delta+1)$-subsets $B_{L+1}$ of $A_{L+1}$.
		\item Choose $t$ arbitrary subsets $B_1\subseteq A_1,B_2\subseteq A_2,\dots,B_t\subseteq A_t$ satisfying 
		\begin{itemize}
			\item $B_1,B_2,\dots,B_m$ 
			have size $r$.
			\item $\sum_{i=1}^t{|B_i|}=k-1-s+\delta-1$
		\end{itemize}
	\end{enumerate}
	In particular, when $t=0$, we only do the first step. It is easy to justify the above method by dividing into two cases $t=m$ and $t=m+1$. Note that $|A_i|-|B_i|\geq \delta-1$ for any $i=1,2,\dots,t,L+1$. We define $$\overline{B_i}=\begin{cases}
		A_i & \text{ if } |A_i|-|B_i|=\delta-1,\\
		B_i & \text{ if } |A_i|-|B_i|\text{ }\textgreater\text{ }\delta-1.
	\end{cases}$$
	for $i=1,2,\dots,t,L+1$.
	
	Let $A\triangleq B_{L+1}\bigcup\cup_{i=1}^{t}B_i$ and $S\triangleq \overline{B_{L+1}}\bigcup\cup_{i=1}^{t}\overline{B_i}$. Since $|A|=k-1$, we have $|\mathcal{C}_{A}|\leq q^{k-1}\textless q^k$. By the definition of $\overline{B_i}$ and $S$, we have $|S|=k-1+(m+1)(\delta-1)=k-1+(\lceil\frac{k+v}{r}\rceil-1)(\delta-1)$. By the locality of $\mathcal{C}$, we have $|\mathcal{C}_S|=|\mathcal{C}_A|$.
	Thus there always exists $S\subseteq [n]$ such that $|S|\geq k-1+(\lceil\frac{k+v}{r}\rceil-1)(\delta-1) $ and $|\mathcal{C}_{S}|\textless q^k$. So we have $d\leq n-k+1-(\lceil\frac{k+v}{r}\rceil-1)(\delta-1)$ by \eqref{minimum distance}.
\end{proof}
Next we apply this bound to construct distance-optimal $(r,\delta)$-LRCs among which the code length can be up to $q-1+\delta$. They can attain the Singleton-type bound \eqref{S-L, 2} for some particular dimension $k$.
\section{Constructions via good polynomials of special forms}\label{sec:3Cons}
\subsection{Making use of some smaller set where good polynomial takes a constant}
In \cite[Construction 1]{kolosov2018optimal}, for the case of $\delta=2$, Kolosov \textit{et al.} proposed a technique to hide at most $r-1$ evaluation points from an arbitrary $(r+1)$-set where the good polynomial $g(x)$ takes a constant, which allows more flexibility on the code length $n$ compared to the RS-like LRCs first introduced in \cite{tamo2014family}. One may note that a good polynomial $g(x)$ can be a constant on some smaller subset of $\mathbb{F}_q$ whose size is less than $\deg(g(x))$. The technique in \cite{kolosov2018optimal} could be adjusted to including such smaller set, and then longer codes than that in \cite{tamo2014family} appear. 
\\
\textbf{Construction A}: Suppose that $g(x)=(x-b_1)(x-b_2)\dots(x-b_s)g_1(x)$ is a good polynomial of degree $r+\delta-1$ that takes non-zero constants on each of the pairwise disjoint sets $A_i=\{a_{i,j}|j=1,2,\dots,r+\delta-1\}\subseteq \mathbb{F}_q,i=1,2,\dots,L$, where $b_1,b_2,\dots,b_s$ are pairwise distinct.

Assume that $1\leq v:=\deg(g_1(x))=r+\delta-1-s\leq r-1$ and $r\leq k\leq Lr+s-\delta+1$.

For the vector $\boldsymbol{I}=(I_{0},I_1,\dots,I_{r-v-1};{I_{i,j},i=0,1,\dots,r-1,j\in [\xi(i)]})\in \mathbb{F}_q^k$ (see \eqref{Infromation, I, form}), we define the evaluation polynomials $S_{\boldsymbol{I}}(x)$ and $F_{\boldsymbol{I}}(x)$ by: 
\begin{equation}\label{ConsA, S(x)}
	S_{\boldsymbol{I}}(x)=I_0+I_1x+\dots+I_{r-1-v}x^{r-1-v},
\end{equation}
\begin{equation}\label{ConsB, F(x)}
	F_{\boldsymbol{I}}(x)=S_{\boldsymbol{I}}(x)g_1(x)+\sum_{i=0}^{r-1} \sum_{j=1}^{\xi(i)} I_{i,j}x^ig(x)^j.
\end{equation}
Then we define the code $\mathcal{C}$ by
$$\mathcal{C}=\{({S_{\boldsymbol{I}}(b_1),S_{\boldsymbol{I}}(b_2),\dots,S_{\boldsymbol{I}}(b_s)};F_{\boldsymbol{I}}(a_{i,j}),i{=}1,2,\dots,L,j{=}1,2,\dots,r+\delta-1)|\boldsymbol{I}\in \mathbb{F}_q^k\}.$$
The code length of $\mathcal{C}$ is $n=L(r+\delta-1)+s=(L+1)(r+\delta-1)-v$.
Now we determine some other parameters of the code.

\begin{proposition}\label{prop:ConsAB}
	The code $\mathcal{C}$ defined in \textbf{Construction A} is an $(r,\delta)$-LRC with code length $n=(L+1)(r+\delta-1)-v=L(r+\delta-1)+s$, dimension $k$ and minimum distance $d=n-k-(\lceil\frac{k+v}{r}\rceil-1)(\delta-1)+1$.
\end{proposition}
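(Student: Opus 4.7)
The plan is to establish length, dimension, locality, and the tight minimum distance in that order.

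The length $n=L(r+\delta-1)+s$ is immediate from the construction. For the dimension, the linear encoding $\boldsymbol{I}\mapsto\bigl((S_{\boldsymbol{I}}(b_l))_l,(F_{\boldsymbol{I}}(a_{i,j}))_{i,j}\bigr)$ has trivial kernel: if the codeword vanishes, then $S_{\boldsymbol{I}}$ has $s$ distinct roots but degree $\le r-v-1=s-\delta<s$, so $S_{\boldsymbol{I}}\equiv 0$. Restricting $F_{\boldsymbol{I}}$ to $A_i$ (where $g\equiv g(A_i)$) gives the polynomial $P_i(x):=\sum_{i'=0}^{r-1}f_{i'}(g(A_i))x^{i'}$ with $f_{i'}(y):=\sum_{j'=1}^{\xi(i')}I_{i',j'}y^{j'}$, of degree $\le r-1<|A_i|$, so $P_i\equiv 0$ and $f_{i'}(g(A_i))=0$ for all $i,i'$. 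Since $f_{i'}(0)=0$ and the values $g(A_i)$ are pairwise distinct (as $g-g(A_i)$ has degree $r+\delta-1$) and non-zero, each $f_{i'}$ has $L+1$ distinct roots; the hypothesis $k\le Lr+s-\delta+1$ rearranges to $k+v\le(L+1)r$, so $\xi(i')\le L$ for every $i'$, forcing $f_{i'}\equiv 0$ and hence $\boldsymbol{I}=\mathbf 0$.

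For $(r,\delta)$-locality, each $A_i$ is a repair group of size $r+\delta-1$, since the codeword restricted to $A_i$ is the evaluation of $P_i$ (degree $\le r-1$) at $r+\delta-1$ distinct points, a subcode of an $[r+\delta-1,r,\delta]$ Reed-Solomon code. Similarly $\{b_1,\dots,b_s\}$ is a repair group of size $s\le r+\delta-2$: the restriction is the evaluation of $S_{\boldsymbol{I}}$ (degree $\le s-\delta$) at $s$ distinct points, so the local minimum distance is $\delta$.

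The decisive step is the minimum distance. Proposition \ref{prop:MiniDistanceUpperBound} supplies the upper bound. For the matching lower bound the key trick is that $g_1(x)\mid F_{\boldsymbol{I}}(x)$: since $g_1\mid g$, every summand $I_{i',j'}x^{i'}g(x)^{j'}$ is divisible by $g_1$, and $S_{\boldsymbol{I}}g_1$ is obviously so. Writing $F_{\boldsymbol{I}}=g_1\tilde F$, we have $\deg\tilde F=\deg F_{\boldsymbol{I}}-v$. At each $b_l$ the terms $x^{i'}g(x)^{j'}/g_1$ all contain the factor $(x-b_l)$, giving $\tilde F(b_l)=S_{\boldsymbol{I}}(b_l)$; at each $a_{i,j}$ one has $g_1(a_{i,j})\ne 0$ (because $g(a_{i,j})=g(A_i)\ne 0$ factors through $g_1$), so $F_{\boldsymbol{I}}(a_{i,j})$ vanishes iff $\tilde F(a_{i,j})$ does. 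Hence the zero-coordinates of any codeword coincide with the vanishing locus of $\tilde F$ on the $n$ evaluation points. For non-zero $\boldsymbol{I}$, injectivity forces $\tilde F\not\equiv 0$, so the number of zero coordinates is at most $\deg\tilde F\le M-v$, where $M:=\max_{\boldsymbol{I}\ne\mathbf 0}\deg F_{\boldsymbol{I}}$.

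The main obstacle is then the arithmetic identity $n-(M-v)=n-k-(\lceil(k+v)/r\rceil-1)(\delta-1)+1$. I plan to compute $M$ by comparing the distinct degrees of the basis monomials $\{x^ig_1(x)\}\cup\{x^{i'}g(x)^{j'}\}$: the maximum is attained at $x^{((k+v)\bmod r)-1}g(x)^{\lfloor(k+v)/r\rfloor}$ when $r\nmid(k+v)$ and at $x^{r-1}g(x)^{(k+v)/r-1}$ when $r\mid(k+v)$. A direct substitution using $s=r+\delta-1-v$ and $k=k'-v$ then verifies the identity in both cases, completing the proof.
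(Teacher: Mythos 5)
Your proof is correct and reaches the same conclusion, but the route to the minimum distance is genuinely different from the paper's, and arguably cleaner.

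The paper bounds $\Wt(\boldsymbol{c}_{\boldsymbol{I}})$ by a two-case split depending on whether $(S_{\boldsymbol{I}}(b_1),\dots,S_{\boldsymbol{I}}(b_s))$ is the zero vector. In the first case it uses $g\mid F_{\boldsymbol{I}}$ and gets to divide out the full degree $r+\delta-1$; in the second it tracks the number $e$ of $b_l$'s where $S_{\boldsymbol{I}}$ vanishes and divides $F_{\boldsymbol{I}}$ by $g_1(x)\prod_{j}(x-b_{i_j})$. The two cases happen to give the same final bound after a little arithmetic. Your key observation — that $g_1 \mid F_{\boldsymbol{I}}$ always, that $\tilde F := F_{\boldsymbol{I}}/g_1$ satisfies $\tilde F(b_l)=S_{\boldsymbol{I}}(b_l)$ (because $g^j/g_1 = g^{j-1}g_2$ vanishes at each $b_l$), and that $\tilde F(a_{i,j})=0 \iff F_{\boldsymbol{I}}(a_{i,j})=0$ — replaces the two evaluation polynomials $S_{\boldsymbol{I}}$ and $F_{\boldsymbol{I}}$ with the single polynomial $\tilde F$ on all $n$ evaluation points. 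The whole weight bound then collapses to $\Wt(\boldsymbol{c}_{\boldsymbol{I}})\ge n-\deg\tilde F \ge n-(M-v)$, with no case analysis. That is a real streamlining, and it also makes the role of the parameter $v$ transparent.

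Two smaller divergences worth noting. First, you establish injectivity of the encoding map by a standalone kernel argument (restricting $F_{\boldsymbol{I}}$ to each $A_i$, noting the evaluations $g(A_i)$ are pairwise distinct and nonzero, and using $\xi(i')\le L$, which indeed follows from $k\le Lr+s-\delta+1 \iff k+v\le (L+1)r$ together with the convention that $\xi(i)=\lfloor k'/r\rfloor-1$ when $r\mid k'$). This is correct but unnecessary: the paper gets injectivity for free from the weight bound $d\ge\delta>0$, and since you already prove that bound, you could drop the separate kernel argument — or keep it, since it is instructive, but the phrase ``injectivity forces $\tilde F\not\equiv 0$'' is circular as written; what you actually need there is the simpler fact that the basis polynomials $x^lg_1(x)$ ($l\le r-v-1$) and $x^ig(x)^j$ ($j\ge 1$) have pairwise distinct degrees, so $F_{\boldsymbol{I}}\not\equiv 0$ whenever $\boldsymbol{I}\ne\boldsymbol{0}$. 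Second, the final arithmetic step — identifying $M = k'+(\lceil k'/r\rceil-1)(\delta-1)-1$ by comparing the two candidate top-degree monomials in the cases $r\nmid k'$ and $r\mid k'$ — is only sketched, but your identification of the extremal monomials is correct and the substitution does close as claimed.
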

\begin{proof}
	For vector $\boldsymbol{I}\in \mathbb{F}_q^k$, let $\boldsymbol{c}_{\boldsymbol{I}}$ be its corresponding codeword of $\mathcal{C}$ in \textbf{Construction A}.
	
	Since $\deg(S_{\boldsymbol{I}}(x))\leq s-\delta$, $S_{\boldsymbol{I}}(x)$ has either $s$ or at most $s-\delta$ zeros in the $s$-set $A_{L+1}:=\{b_1,b_2,\dots,b_s\}$. Since $\deg(F_{\boldsymbol{I}}(x)|_{A_i})\allowbreak\leq r-1$ for any $i\in [L]$, $F_{\boldsymbol{I}}(x)$ has either $r+\delta-1$ or $\leq r-1$ zeros in the $(r+\delta-1)$-set $A_{i}$.	So we have $\D(\mathcal{C}|_{A_{i}})\geq \delta$ or $\mathcal{C}|_{A_{i}}=\{\boldsymbol{0}\}$ for any $i\in [L+1]$. $\mathcal{C}$ has $(r,\delta)$-locality.	
	
	Recall that we use $k'$ to denote $k+v$ for simplicity. Next we consider the dimension and minimum distance of $\mathcal{C}$.
	
	Assume that $\boldsymbol{I}\in \mathbb{F}_q^k$ is non-zero. 
	It is easy to see that $\deg(F_{\boldsymbol{I}}(x))\geq \deg(g_1(x))$. We also have the following upper bound on
	the degree of $F_{\boldsymbol{I}}(x)$.
	
	If $r\nmid k'$, then $\deg(F_{\boldsymbol{I}}(x))\leq \lfloor\frac{k'}{r}\rfloor(r+\delta-1)+(k'\mod r)-1=k'+(\lceil\frac{k'}{r}\rceil-1)(\delta-1)-1$ by \eqref{Infromation, I, form} and \eqref{ConsB, F(x)}. 
	
	If $r\mid k'$, then $\deg(F_{\boldsymbol{I}}(x))\leq (\lfloor\frac{k'}{r}\rfloor-1)(r+\delta-1)+r-1=k'+(\lceil\frac{k'}{r}\rceil-1)(\delta-1)-1$ by \eqref{Infromation, I, form} and \eqref{ConsB, F(x)}.
	
	Hence, we have \begin{equation}\label{F(x) degree upper bound}
		v=\deg(g_1(x))\leq \deg(F_{\boldsymbol{I}}(x))\leq k'+(\lceil\frac{k'}{r}\rceil-1)(\delta-1)-1.
	\end{equation}
	
	To determine the lower bound of $\Wt(\boldsymbol{c}_{\boldsymbol{I}})$, we consider the following two cases.
	\begin{enumerate}[-]
		\item 
		If $S_{\boldsymbol{I}}(b_1)=S_{\boldsymbol{I}}(b_2)=\dots=S_{\boldsymbol{I}}(b_s)=0$, then $S_{\boldsymbol{I}}(x)=0$ since $\deg(S_{\boldsymbol{I}}(x))\leq s-\delta$, then we have $g(x)|F_{\boldsymbol{I}}(x)$. 
		Note that $\frac{F_{\boldsymbol{I}}(x)}{g(x)}$ is a non-zero polynomial and $\deg(\frac{F_{\boldsymbol{I}}(x)}{g(x)})\leq k'+(\lceil\frac{k'}{r}\rceil-1)(\delta-1)-1-(r+\delta-1)$. Since $g(a)\neq 0$ for any $a\in \cup_{i=1}^{L}A_i$, there are at most $k'+(\lceil\frac{k'}{r}\rceil-1)(\delta-1)-1-(r+\delta-1)$ roots of $F_{\boldsymbol{I}}(x)$ in the set $\cup_{i=1}^{L}A_i$.
		
		Thus, we have \begin{align*}
			\Wt(\boldsymbol{c}_{\boldsymbol{I}})
			&\geq L(r+\delta-1)-\bigg(k'+(\lceil\frac{k'}{r}\rceil-1)(\delta-1)-1-(r+\delta-1)\bigg)\\
			&=n-s+1-k-v-(\lceil\frac{k'}{r}\rceil-1)(\delta-1)+r+\delta-1\\
			&=n-k-(\lceil\frac{k'}{r}\rceil-1)(\delta-1)+1.\\
		\end{align*}
		\item
		If $\big(S_{\boldsymbol{I}}(b_1),S_{\boldsymbol{I}}(b_2),\dots,S_{\boldsymbol{I}}(b_s)\big)\neq \boldsymbol{0}\in \mathbb{F}_q^{s}$, since $\deg(S_{\boldsymbol{I}}(x))\leq s-\delta$, we assume that $S_{\boldsymbol{I}}(b)=0$ for any $b\in\{b_{i_1},b_{i_2},\dots,b_{i_e}\}\subseteq A_{L+1}$ and $S_{\boldsymbol{I}}(b)\neq 0$ for any $b\in A_{L+1}\backslash\{b_{i_1},b_{i_2},\dots,b_{i_e}\}$, where $e\leq s-\delta$. We have $\prod_{j=1}^{e}(x-b_{i_j})|S_{\boldsymbol{I}}(x)$ since $b_{i_1},b_{i_2},\dots,b_{i_e}$ are distinct roots of $S_{\boldsymbol{I}}(x)$. Then by \eqref{ConsB, F(x)}, we have $\big(g_1(x)\prod_{j=1}^{e}(x-b_{i_j})\big)\big|F_{\boldsymbol{I}}(x)$.
		Note that $\frac{F_{\boldsymbol{I}}(x)}{g_1(x)\prod_{j=1}^{e}(x-b_{i_j})}$ is a non-zero polynomial whose degree is at most $k'+(\lceil\frac{k'}{r}\rceil-1)(\delta-1)-1-(e+v)$. There are at most $k'+(\lceil\frac{k'}{r}\rceil-1)(\delta-1)-1-(v+e)$ roots of $F_{\boldsymbol{I}}(x)$ in the set $\cup_{i=1}^{L}A_i$ since $g_1(x)\prod_{j=1}^{e}(x-b_{i_j})$ has no root in the
		set $\cup_{i=1}^{L}A_i$.
	\end{enumerate}  
	Thus, we have\begin{align*}
		\Wt(\boldsymbol{c}_{\boldsymbol{I}})
		&\geq L(r+\delta-1)-\bigg(k'+(\lceil\frac{k'}{r}\rceil-1)(\delta-1)-1-(e+v)\bigg)+s-e\\
		&=L(r+\delta-1)+s+1-k'-(\lceil\frac{k'}{r}\rceil-1)(\delta-1)+v\\
		&=n-k-(\lceil\frac{k'}{r}\rceil-1)(\delta-1)+1.
	\end{align*}
	Since $k\leq Lr+s-\delta+1$, we have $\Wt(\boldsymbol{c}_{\boldsymbol{I}})\geq n-k-(\lceil\frac{k'}{r}\rceil-1)(\delta-1)+1\geq \delta$. The linear map $\boldsymbol{I}\mapsto \boldsymbol{c}_{\boldsymbol{I}}$ is an injective since its kernel only contains $\boldsymbol{0}$. Hence, the dimension of $\mathcal{C}$ is $k$ and $d= n-k-(\lceil\frac{k'}{r}\rceil-1)(\delta-1)+1$ by Proposition \ref{prop:MiniDistanceUpperBound}.
\end{proof}
We summarize $\textbf{Construction A}$ in the following theorem.
\begin{theorem}\label{thm:ConsAB}
	Suppose $2\leq \delta\leq s\leq r+\delta-2$, $2\leq r\leq k\leq Lr+s-\delta+1$ and $g(x)=(x-b_1)(x-b_2)\dots(x-b_s)g_1(x)\in \mathbb{F}_q[x]$ is a good polynomial of degree $r+\delta-1$ that takes non-zero constants on $L$ disjoint $(r+\delta-1)$-sets, where $b_1,b_2,\dots,b_s$ are distinct, then there explicitly exists a distance-optimal $(r,\delta)$-LRC with parameters $$[n=L(r+\delta-1)+s,k,d=n-k-(\lceil\frac{k+v}{r}\rceil-1)(\delta-1)+1]_q,$$
	where $v=r+\delta-1-s$.
\end{theorem}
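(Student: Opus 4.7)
The plan is to package Construction A together with the two propositions that precede the theorem: Proposition \ref{prop:ConsAB} supplies the parameters, and Proposition \ref{prop:MiniDistanceUpperBound} certifies distance-optimality. First I would verify that the hypotheses of the theorem are exactly those needed to run Construction A. Setting $v := r+\delta-1-s$, the condition $2\le\delta\le s\le r+\delta-2$ translates into $1\le v\le r-1$, and the condition $2\le r\le k\le Lr+s-\delta+1$ is the dimension range required in the construction. Since $g(x)=(x-b_1)\cdots(x-b_s)g_1(x)$ has degree $r+\delta-1$ with $\deg(g_1)=v$ and takes non-zero constants on $L$ pairwise disjoint $(r+\delta-1)$-subsets, Construction A applies and produces an explicit code $\mathcal{C}\subseteq\mathbb{F}_q^n$ of length $n=L(r+\delta-1)+s$.

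Next I would invoke Proposition \ref{prop:ConsAB} directly to conclude that the resulting $\mathcal{C}$ is an $(r,\delta)$-LRC of dimension $k$ and minimum distance $d=n-k-(\lceil\tfrac{k+v}{r}\rceil-1)(\delta-1)+1$. This already matches every parameter claimed in the theorem; only the word \emph{distance-optimal} still needs justification.

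For optimality, I would apply Proposition \ref{prop:MiniDistanceUpperBound} to the same collection of local repair groups that the construction provides: the $L$ sets $A_1,\dots,A_L$ of size $r+\delta-1$ coming from the good polynomial, together with $A_{L+1}=\{b_1,\dots,b_s\}$ of size $s=r+\delta-1-v$. These groups are pairwise disjoint (by the defining property of $g$ and the distinctness of the $b_i$), and the bound $k\le Lr+s-\delta+1$ is assumed, so the proposition yields $d\le n-k-(\lceil\tfrac{k+v}{r}\rceil-1)(\delta-1)+1$. Since this matches the value realized by $\mathcal{C}$, distance-optimality follows.

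I do not anticipate any real obstacle here, because the theorem is essentially the statement of Proposition \ref{prop:ConsAB} with the optimality clause added. The only minor care is bookkeeping: checking that the decomposition $s+v=r+\delta-1$ with $1\le v\le r-1$ aligns the hypotheses of the theorem with those of Construction A and Proposition \ref{prop:MiniDistanceUpperBound}, and confirming that $\{b_1,\dots,b_s\}$ genuinely serves as a local repair group so that the upper-bound proposition can be invoked with the correct value of $v$.
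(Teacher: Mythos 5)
Your proposal is correct and follows exactly the route the paper takes: Theorem \ref{thm:ConsAB} is presented in the paper as a summary of Construction A, with all the substance already carried by Proposition \ref{prop:ConsAB} (which itself invokes Proposition \ref{prop:MiniDistanceUpperBound} to pin down $d$ exactly), so no separate argument is needed beyond the bookkeeping you describe. Your check that $A_1,\dots,A_L,A_{L+1}$ are pairwise disjoint because $g$ vanishes at the $b_i$ but is a non-zero constant on each $A_j$ is precisely the point that makes Proposition \ref{prop:MiniDistanceUpperBound} applicable.
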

\begin{remark}\label{rem:ConsBrem}
	(1) One may note that if $v=0$ and $g_1(x)=1$ in \textbf{Construction A}, the code $\mathcal{C}$ defined in it is the same as that defined in \cite[Construction 1]{tamo2014family} and \cite[Construction 8]{tamo2014family}, which means that \textbf{Construction A} is indeed a generalization. We omit this case only because it will not produce $(r,\delta)$-LRCs with new parameters.
	
	(2) In \textbf{Construction A} and Theorem \ref{thm:ConsAB}, if $1\leq (k \mod r)\leq r-v$, then $\mathcal{C}$'s minimum distance  $$d=n-k-(\lceil\frac{k+v}{r}\rceil-1)(\delta-1)+1=n-k-(\lceil\frac{k}{r}\rceil-1)(\delta-1)+1$$
	attains the Singleton-type bound \eqref{S-L, 2}.
	Note that $(n\mod (r+\delta-1))=s$ and $r+\delta-1\textless s+r\textless 2(r+\delta-1)$. We have $$\big((n+r)\mod (r+\delta-1)\big)=\big((s+r)\mod (r+\delta-1)\big)=s+r-(r+\delta-1)=r-v.$$ Thus, if $1\leq (k \mod r)\leq \big((n+r)\mod (r+\delta-1)\big)$, then $\mathcal{C}$ is a Singleton-optimal $(r,\delta)$-LRC.
	
	(3) For a polynomial $f(x)\in \mathbb{F}_q[x]$ of degree $m\geq 3$, if $G(f)\geq 1$ and $N_{f}(s)\geq 1$ for some $2\leq s\leq m-1$, then there exists $t_0\in\mathbb{F}$ such that $g(x)=f(x)-t_0$ has exactly $s$ distinct roots $b_1,b_2,\dots,b_s$ in $\mathbb{F}$. It is easy to verify that $G(f)=G(g)$ and $g(x)=(x-b_1)(x-b_2)\dots(x-b_s)g_1(x)$, which satisfies the condition of $\textbf{Construction A}$. The code length can be increased by at most $s$ compared to known RS-like LRCs. It is meaningful to consider the values $N_f(2),N_f(3),\dots,N_{f}(m-1)$ when $G(f)$ is fixed. For example, by \cite[Table 3]{chen2021good}, when $q\equiv 3\mod 8$, for the polynomial $f(x)=x^4+a_1x^2$ over $\mathbb{F}_q$, if $a_1\neq 0$ is a square, then $G(f)=(q-3)/8,N_f(2)=N_f(3)=0$, if $a_1\neq 0$ is a non-square, then $G(f)=(q-3)/8$, $N_f(3)=1$, if $a_1=0$, then $G(f)=0$. Then the code length can be increased by $3$ when $a_1$ is a non-square.
\end{remark}
Now we review some other known good polynomials that satisfy conditions of \textbf{Construction A}.
\begin{lemma}[{{\cite[Theorem 3.3]{tamo2014family}}}]\label{lem:ConsBlem1}
	Let $l,s,m$ be integers such that $l\mid s$, $p^l \equiv 1\mod m$, and $p$ is a prime. Let $H$ be an additive subgroup
	of the field $\mathbb{F}_{p^s}$ that is closed under the multiplication by the
	field $\mathbb{F}_{p^l}$ , and let $\alpha_1,\dots,\alpha_m$ be the $m$-th degree roots of unity
	in $F_{p^s}$. Then for any $b\in \mathbb{F}_{p^s}$, the polynomial $g(x)=\prod_{i=1}^{m}\prod_{h\in H}(x+h+\alpha_i)$ is constant on the union of cosets of $H$, $\cup_{1\leq i\leq m}H + b\alpha_i$, and
	the size of this union satisfies
	$$|\cup_{1\leq i\leq m}H + b\alpha_i|=\begin{cases}
		|H|&\text{if} \quad b\in H\\
		m|H|&\text{if} \quad b\notin H\\
	\end{cases}$$
\end{lemma}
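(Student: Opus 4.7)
The plan is to exploit the linearized-polynomial structure of the inner product $\prod_{h\in H}(x-h)$. Setting $h_H(y):=\prod_{h\in H}(y-h)$ and observing that $H=-H$ (it is an additive subgroup), one immediately has $g(x)=\prod_{i=1}^{m}h_H(x+\alpha_i)$. Since $H$ is an additive subgroup closed under multiplication by $\mathbb{F}_{p^l}$, it is an $\mathbb{F}_{p^l}$-subspace, so $h_H$ is an $\mathbb{F}_{p^l}$-linearized polynomial: it is additive and satisfies $h_H(cy)=c\,h_H(y)$ for every $c\in\mathbb{F}_{p^l}$. Moreover, the hypothesis $p^l\equiv 1\pmod m$ guarantees that all $m$-th roots of unity $\alpha_1,\dots,\alpha_m$ lie in $\mathbb{F}_{p^l}$, so they are legitimate scalars for $h_H$.

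The core computation is to evaluate $g$ at a generic point $h_0+b\alpha_j$ of $H+b\alpha_j$. I would write
\[
g(h_0+b\alpha_j)=\prod_{i=1}^{m}h_H\!\bigl(h_0+b\alpha_j+\alpha_i\bigr)
\]
and kill the $h_0$-term using $h_H(h_0)=0$ and additivity. Then using $\alpha_i,\alpha_j\in\mathbb{F}_{p^l}$ and $\mathbb{F}_{p^l}$-linearity I would reduce each factor to $\alpha_i\,h_H(1)+\alpha_j\,h_H(b)$, yielding
\[
g(h_0+b\alpha_j)=\prod_{i=1}^{m}\bigl(\alpha_i h_H(1)+\alpha_j h_H(b)\bigr).
\]
To drop the $\alpha_j$ (and thus the dependence on $j$), I would pull $\alpha_j$ out, use $\alpha_j^{m}=1$, and reindex by $\alpha_i\mapsto \alpha_i/\alpha_j$, which is a permutation of the group of $m$-th roots of unity. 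This reduces the product to $\prod_i\bigl(\alpha_i h_H(1)+h_H(b)\bigr)$. Finally, invoking the elementary identity $\prod_{i=1}^{m}(z+\alpha_i)=z^{m}+(-1)^{m+1}$ (obtained from $\prod_i(y-\alpha_i)=y^{m}-1$) gives
\[
g(h_0+b\alpha_j)=h_H(b)^{m}+(-1)^{m+1}h_H(1)^{m},
\]
which depends on neither $h_0$ nor $j$. Hence $g$ is constant on $\bigcup_{1\le j\le m}(H+b\alpha_j)$.

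For the size statement, I would argue as follows. If $b\in H$, then because $H$ is closed under multiplication by $\mathbb{F}_{p^l}$ and each $\alpha_j\in\mathbb{F}_{p^l}$, we get $b\alpha_j\in H$, so every coset $H+b\alpha_j$ coincides with $H$, giving $|H|$ points. If $b\notin H$ and two of these cosets coincide, say $H+b\alpha_i=H+b\alpha_j$ with $i\neq j$, then $b(\alpha_i-\alpha_j)\in H$; since $\alpha_i-\alpha_j\in\mathbb{F}_{p^l}^{*}$ and $H$ is $\mathbb{F}_{p^l}$-stable, multiplying by $(\alpha_i-\alpha_j)^{-1}$ forces $b\in H$, a contradiction. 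Thus the $m$ cosets are pairwise disjoint and the union has size $m|H|$.

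I expect the main obstacle to be bookkeeping in the central product identity: one must carefully justify that $\alpha_i\mapsto \alpha_i/\alpha_j$ permutes the roots of $x^m-1$ and that each factorization step genuinely uses only $\mathbb{F}_{p^l}$-scalars (so that $h_H$ behaves linearly on them). Once the linearized-polynomial framework is set up and the assumption $p^l\equiv 1\pmod m$ is used to place the $\alpha_i$ in $\mathbb{F}_{p^l}$, the rest is mechanical.
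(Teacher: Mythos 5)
Your proof is correct and follows the standard linearized-polynomial argument of Tamo and Barg: since $H$ is an $\mathbb{F}_{p^l}$-subspace, the subspace polynomial $h_H(y)=\prod_{h\in H}(y-h)$ is $\mathbb{F}_{p^l}$-linearized, the roots of unity lie in $\mathbb{F}_{p^l}$ so they act as scalars, and the reindexing $\alpha_i\mapsto\alpha_i/\alpha_j$ together with $\alpha_j^m=1$ kills the $j$-dependence, while the coset-counting argument correctly uses $\mathbb{F}_{p^l}$-stability of $H$. The paper itself only cites this lemma from Tamo--Barg (their Theorem~3.3) without reproving it, and your argument is essentially their proof.
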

The following corollary is an immediate consequence of Lemma \ref{lem:ConsBlem1} and Remark \ref{rem:ConsBrem} (3). 
\begin{corollary}\label{cor:ConsBcor1}
	In Lemma \ref{lem:ConsBlem1}, if $m\geq 2$ and $2\leq |H|\textless q$, then we have $G(g)=\frac{q-|H|}{m{|H|}}$ and $N_{g}(|H|)=1$. For any $m|H|-|H|+1\leq r\leq m|H|-1$ and $\delta=m|H|-r+1$, we have a distance-optimal $(r,\delta)$-LRC whose length is $n=q$ and minimum distance is $n-k-(\lceil\frac{k+m|H|-|H|}{r}\rceil-1)(\delta-1)+1$ by Theorem \ref{thm:ConsAB}, where $k\in [r,r\frac{q-|H|}{m{|H|}}+|H|-\delta+1]$ is the dimension of this code.
\end{corollary}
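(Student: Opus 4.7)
The plan is to prove this corollary in two steps: first compute the invariants $G(g)$ and $N_g(|H|)$ of the polynomial $g$ from the block structure produced by Lemma~\ref{lem:ConsBlem1}, then apply Theorem~\ref{thm:ConsAB} (in the spirit of Remark~\ref{rem:ConsBrem}(3)) to the shifted polynomial $\tilde g(x)=g(x)-g(H)$.

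For Step~1, I would partition $\mathbb{F}_{p^s}$ using Lemma~\ref{lem:ConsBlem1}. Any $b\in H$ collapses the coset union to the ``small'' block $H$ itself of size $|H|$ (since $\alpha_i\in\mathbb{F}_{p^l}$ and $H$ is closed under $\mathbb{F}_{p^l}$-multiplication, so $b\alpha_i\in H$), while the remaining $H$-cosets organize into $L=(q-|H|)/(m|H|)$ pairwise disjoint ``big'' blocks of the form $\bigcup_i(H+b\alpha_i)$ with $b\notin H$, each of size $m|H|$. Since $\deg g=m|H|$, every fiber $g^{-1}(t)$ has at most $m|H|$ elements, which forces $g$ to take pairwise distinct constants on these blocks: two big blocks sharing a value would yield a fiber of size $2m|H|$, and the small block sharing with a big block would give $(m+1)|H|$, both exceeding $\deg g$. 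Consequently $G(g)=N_g(m|H|)=L$, and the unique value $g(H)$ produces a fiber of size $|H|$, giving $N_g(|H|)=1$.

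For Step~2, set $\tilde g(x)=g(x)-g(H)$. By Step~1, the zero set of $\tilde g$ coincides with $g^{-1}(g(H))=H$, so $\tilde g$ has exactly $|H|$ distinct roots (the elements of $H$) and, by the block-value distinctness, is non-zero on each of the $L$ big blocks. Writing $\tilde g(x)=\prod_{h\in H}(x-h)\cdot\tilde g_1(x)$ with $\deg\tilde g_1=(m-1)|H|$ supplies the factorization required by Theorem~\ref{thm:ConsAB} with $s=|H|$ and $v=(m-1)|H|$. The numerical conditions $2\leq\delta\leq|H|\leq r+\delta-2$ of the theorem reduce to $(m-1)|H|+1\leq r\leq m|H|-1$ together with $(m-1)|H|\geq 1$, which are exactly the corollary's assumptions (the second inequality being automatic from $m,|H|\geq 2$). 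Theorem~\ref{thm:ConsAB} then yields a distance-optimal $(r,\delta)$-LRC of length $L(r+\delta-1)+|H|=q$ with the claimed dimension range and minimum distance. The only non-routine point is the value-distinctness argument in Step~1; everything else is direct bookkeeping.
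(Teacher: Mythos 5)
Your proof is correct, and it follows exactly the route the paper indicates (Lemma~\ref{lem:ConsBlem1} for the block partition, the degree argument to get value-distinctness, Remark~\ref{rem:ConsBrem}(3) to pass to the shifted polynomial $\tilde g = g - g(H)$, and Theorem~\ref{thm:ConsAB} for the final parameters). The paper itself supplies no proof, only asserting that the corollary is an immediate consequence, so your write-up is a faithful expansion of that one-line claim; all the numerics ($n=q$, $v=(m-1)|H|$, the distance formula, the $k$-range, and the reduction of the hypotheses of Theorem~\ref{thm:ConsAB} to the stated bounds on $r$) check out.

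One small point worth tightening: you assert that the big blocks $\bigcup_i(H+b\alpha_i)$ with $b\notin H$ are pairwise disjoint and partition $\mathbb{F}_q\setminus H$ before running the degree argument, but that assertion is itself most cleanly obtained \emph{from} the degree argument. Since $\deg g = m|H|$, any value $g$ attains on a set of size $m|H|$ has that set as its full fiber, so two big blocks with a common element have equal fibers, hence coincide; together with the observation that every $c\notin H$ lies in its own big block $\bigcup_i(H+c\alpha_i)$, this gives the partition and the count $L=(q-|H|)/(m|H|)$. (Alternatively one can verify directly that $H+b\alpha_i = H+b'\alpha_j$ propagates, using that $\alpha_j\alpha_i^{-1}\in\mathbb{F}_{p^l}$ and $H$ is $\mathbb{F}_{p^l}$-closed.) Also note that $L\geq 1$ is automatic here: $q/|H|$ is a power of $p^l>1$ and $p^l\geq m+1$, so $q\geq (m+1)|H|$, which makes Remark~\ref{rem:ConsBrem}(3) applicable. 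With that gap closed, the argument is complete.
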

The Dickson polynomials (of the first kind) of degree $m$ for some $a\in \mathbb{F}_q$ is defined by $$D_m(x,a)=\sum\limits_{i=0}^{\lfloor m/2\rfloor}\frac{m}{m-i}\binom{m-i}{i}(-a)^ix^{m-2i}.$$
%
\begin{lemma}[{{\cite[Theorem 19]{chen2022function}}}]\label{lem:ConsBlem2}
	If $f(x)=D_m(x, a)-D_m(0, a)$ for $a\in \mathbb{F}_q^*$, $m\textgreater 2$ with $\gcd(m, q)=1$ and
	$q\equiv \pm 1 (\mod m)$, then
	\begin{equation}\label{Dickson, 2, equation}
		G(f)=\begin{cases}
			\lfloor\frac{q-3}{2m}\rfloor&\mbox{if}\quad q \quad\mbox{is odd and}\quad q\equiv \eta(a)\equiv 1\mod m,\\
			
			\lfloor\frac{q+1}{2m}\rfloor&\mbox{if}\quad q \quad\mbox{is odd and}\quad q\equiv \eta(a)\equiv −1 \mod m,\\
			
			\lfloor\frac{q}{2m}\rfloor&\mbox{otherwise},
		\end{cases}
	\end{equation}
	where $\eta$ is the quadratic character of $\mathbb{F}_q^*$, i.e.,
	$$\eta(c)=
	\begin{cases}
		0&if\quad c = 0,\\
		1&if \quad c\in \mathbb{F}_q^*\quad \mbox{is a square},\\
		-1&if \quad c\in \mathbb{F}_q^*\quad \mbox{is a non-square}.
	\end{cases}
	$$
\end{lemma}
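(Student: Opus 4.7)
$G(f) = N_f(\deg f)$ counts $t \in \mathbb{F}_q$ such that $f(x) + t$ has exactly $m = \deg(f)$ distinct roots in $\mathbb{F}_q$. Via the linear change of variable $c = D_m(0,a) - t$, this equals the number of $c \in \mathbb{F}_q$ for which $D_m(x,a) = c$ has $m$ pairwise distinct solutions in $\mathbb{F}_q$; since $\gcd(m, q) = 1$, this is the same as $D_m(x,a) - c$ splitting into $m$ distinct linear factors over $\mathbb{F}_q$. The plan is to count such $c$ using the classical Dickson parametrization.

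The central tool is the identity
\[
D_m\!\left(y + \tfrac{a}{y},\, a\right) \;=\; y^m + \tfrac{a^m}{y^m}, \qquad y \in \overline{\mathbb{F}_q}^{\,*}.
\]
A short Frobenius computation using $a \in \mathbb{F}_q^*$ shows that $y + a/y \in \mathbb{F}_q$ iff $y \in T_a := \mathbb{F}_q^* \cup M_a$, where $M_a = \{y \in \mathbb{F}_{q^2}^* : y^{q+1} = a\}$, and the involution $\sigma(y) = a/y$ makes $T_a/\sigma$ into a $2$-to-$1$ cover of $\mathbb{F}_q$, ramified at the fixed points $y^2 = a$. Thus the roots $x \in \mathbb{F}_q$ of $D_m(x,a) = c$ are in bijection with the $\sigma$-orbits of $y \in T_a$ satisfying $y^m + a^m y^{-m} = c$. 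Substituting $z = y^m$ converts the last equation into the quadratic $z^2 - cz + a^m = 0$, whose roots $z_1, z_2 = a^m/z_1$ are swapped by $\sigma$. Hence each $\sigma$-orbit contributes exactly one $y \in T_a$ with $y^m = z_1$, and the fiber $D_m^{-1}(c) \cap \mathbb{F}_q$ has size $m$ iff $z_1 \neq z_2$ \emph{and} all $m$ $m$-th roots of $z_1$ in $\overline{\mathbb{F}_q}^{\,*}$ already lie in $T_a$.

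The hypothesis $q \equiv \pm 1 \pmod m$ forces exactly one of the inclusions $\mu_m \subseteq \mathbb{F}_q^*$ or $\mu_m \subseteq \mu_{q+1}$, which is what makes the $T_a$-closure condition tractable: in the $+1$ case it reduces to $z_1 \in (\mathbb{F}_q^*)^m$, and in the $-1$ case to $z_1 \in M_a^m$. In either situation, letting $H$ denote the relevant subset (a subgroup of $\mathbb{F}_q^*$ of order $(q-1)/m$, or a subset of $M_a$ of size $(q+1)/m$), the admissible $c$'s are the values of the $z \mapsto z + a^m/z$ map on $H$ modulo the $z \leftrightarrow a^m/z$ involution. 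This produces a raw count of $|H|/2$, modified by boundary corrections at the fixed points $z^2 = a^m$ (which lie in $H$ precisely when $a^m$ is a square inside $H$) and at the discriminant locus $c^2 = 4a^m$ (which must be excluded because $z_1 = z_2$ there). Both of these parity obstructions are simultaneously governed by the joint condition $q \equiv \eta(a) \pmod m$: when the sign of $q \bmod m$ aligns with $\eta(a)$, an extra fixed point contributes a correction of $\pm 1$, while in the misaligned cases or in characteristic $2$ the correction vanishes, which is exactly why \eqref{Dickson, 2, equation} bifurcates into the three stated cases.

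The principal technical obstacle is the careful bookkeeping in this last step: in each of the sub-cases indexed by $\eta(a)$, $q \bmod m$, and the parity of $q$, one must track the number of fixed points $z^2 = a^m$ that actually lie in $H$, subtract the excluded discriminant values, and verify that the resulting half-integer raw counts combine with these integer corrections to produce exactly $\lfloor (q-3)/2m \rfloor$, $\lfloor (q+1)/2m \rfloor$, or $\lfloor q/2m \rfloor$. Once the structural framework above is in place, what remains is essentially a finite and mechanical case check on a $2 \times 2 \times 2$ table of $(\eta(a), q \bmod m, \operatorname{char}(\mathbb{F}_q))$, and the formula falls out after collecting terms.
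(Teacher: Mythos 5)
This lemma is quoted verbatim from the literature (Theorem~19 of Chen, Fang, Liu and Ma, reference \texttt{chen2022function}); the present paper does not prove it, so there is no in-paper argument against which to compare your attempt. Judged on its own terms, the structural skeleton you set up is the standard and correct framework for Dickson value-set counts: the functional identity $D_m(y + a/y,\,a) = y^m + a^m/y^m$; the characterization of $T_a = \{y \in \overline{\mathbb{F}_q}^* : y + a/y \in \mathbb{F}_q\}$ as $\mathbb{F}_q^* \cup \{y : y^{q+1} = a\}$ via the involution $\sigma(y)=a/y$; the passage to the quadratic $z^2 - cz + a^m$ with $\sigma$-conjugate roots $z_1, z_2$; and the reduction, under $q \equiv \pm 1 \pmod m$, of the ``all $m$-th roots of $z_1$ lie in $T_a$'' condition to membership of $z_1$ in a fixed set $H$ of size $(q \mp 1)/m$ (namely $(\mathbb{F}_q^*)^m$ or $M_a^m$). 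All of these steps are sound, and together they correctly reduce $G(f)$ to an orbit count on $H$ under $z \leftrightarrow a^m/z$ with the ramified values $z^2 = a^m$ discarded.

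What is missing is the proof of the formula itself. The last paragraph, where you announce that ``what remains is essentially a finite and mechanical case check'' and that ``the formula falls out after collecting terms,'' is exactly where the entire content of the lemma resides. The raw half-count $|H|/2$ is not an integer and is never equal to any of the three expressions in \eqref{Dickson, 2, equation}; the answer depends delicately on how many square roots of $a^m$ actually land inside $H$, on the parity of $(q\mp 1)/m$, and on whether the characteristic is $2$ --- and it is precisely the interplay of these three effects that produces the trichotomy governed by $q \bmod m$ and $\eta(a)$. You assert that ``the correction vanishes'' in the misaligned and even-characteristic cases and that ``an extra fixed point contributes $\pm 1$'' otherwise, but these claims are not derived, and without that derivation there is no way to see why the three floor expressions $\lfloor (q-3)/2m\rfloor$, $\lfloor (q+1)/2m\rfloor$, $\lfloor q/2m\rfloor$ emerge rather than some other offset. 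As it stands, the proposal is a correct and well-chosen \emph{strategy} for the proof, not a proof of the stated formula.
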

\smallskip
If $2\nmid m$, $\gcd(m, q) = 1$ and $\omega\in\mathbb{F}_{q^2}$
is a primitive $m$-th root of unity, then
\begin{equation}\label{Dickson, factor}
	D_m(x,a)-D_m(x,0)=x\prod_{i=1}^{(m-1)/2}(x^2+a(\omega^i-\omega^{-i})^2)
\end{equation}
Assume that $q$ is even and $q\equiv -1 \mod m$, then $\omega^i-\omega^{-i}\neq 0$ and $$((\omega^i-\omega^{-i})^2)^{q-1}{=}\frac{((\omega^i+\omega^{-i})^2)^{q}}{(\omega^i+\omega^{-i})^2}{=}\frac{\omega^{2qi}+\omega^{-2qi}}{\omega^{2i}+\omega^{-2i}}{=}\frac{\omega^{2i(q+1)}/\omega^{2i}+\omega^{-2i(q+1)}/\omega^{-2i}}{\omega^{2i}+\omega^{-2i}}=1,$$
where the last
equation is due to the facts that $\omega$ is the $m$-th primitive roots and $m|(q+1)$.
If $\omega^i-\omega^{-i}=\omega^j-\omega^{-j}$, then we have $(\omega^{i+j}-1)(\omega^i-\omega^j)=0$. 
Hence,  $(\omega^1-\omega^{-1})^2,(\omega^2-\omega^{-2})^2,\dots,(\omega^{(m-1)/2}-\omega^{-(m-1)/2})^2$ are pairwise distinct non-zero elements and are all in $\mathbb{F}_q$. Thus, $f(x)=D_m(x,a)-D_m(x,0)$ has exactly $\frac{m+1}{2}$ distinct roots in $\mathbb{F}_q$ by $\eqref{Dickson, factor}$. We have the following corollary by Lemma \ref{lem:ConsBlem2} and \textbf{Construction A}.
\begin{corollary}\label{cor:ConsBcor2}
	If $m\textgreater 2$, $q$ is even and $q\equiv -1\mod m$, then for any $\frac{m+1}{2}\leq r\leq m-1$ and $\delta=m+1-r$, we have a distance-optimal $(r,\delta)$-LRC whose length is $n=m\lfloor\frac{q}{2m}\rfloor+\frac{m+1}{2}$ and minimum distance is $n-k-(\lceil\frac{k+(m-1)/2}{r}\rceil)(\delta-1)+1$, where $k\in [r,r\lfloor\frac{q}{2m}\rfloor+\frac{m+1}{2}-\delta+1]$ is the dimension of this code.
\end{corollary}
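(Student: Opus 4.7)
The plan is to view this corollary as a direct specialization of Theorem \ref{thm:ConsAB} applied to the polynomial $f(x) = D_m(x,a) - D_m(x,0)$, combining the factorization developed in the paragraph preceding the statement with the count $G(f) = \lfloor q/(2m)\rfloor$ supplied by Lemma \ref{lem:ConsBlem2}.

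Set $g(x) := f(x)$, $s := (m+1)/2$, and $v := r+\delta-1-s = (m-1)/2$. The text preceding the corollary already shows that the $(m-1)/2$ elements $a(\omega^i-\omega^{-i})^2 \in \mathbb{F}_q^*$ are pairwise distinct. Since $q$ is even and $\mathbb{F}_q$ is perfect, each factor $x^2 + a(\omega^i-\omega^{-i})^2$ appearing in \eqref{Dickson, factor} equals the squared linear factor $(x + \sqrt{a}(\omega^i-\omega^{-i}))^2$ over $\mathbb{F}_q$. Together with the outer factor $x$, this gives $f(x) = (x-b_1)(x-b_2)\cdots(x-b_s)\,g_1(x)$ with $s$ pairwise distinct roots $b_1,\ldots,b_s \in \mathbb{F}_q$ and $\deg g_1 = v$, which is precisely the form required by \textbf{Construction A}.

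Next I would invoke Lemma \ref{lem:ConsBlem2}, which in this regime gives $G(f) = \lfloor q/(2m)\rfloor =: L$. By Definition \ref{N_f, defn} this yields $L$ shifts $t_1,\ldots,t_L \in \mathbb{F}_q$ for which $f(x)+t_\ell$ splits into $m$ distinct roots in $\mathbb{F}_q$; the corresponding fibres $A_1,\ldots,A_L$ are pairwise disjoint $m$-subsets on which $f$ takes the constant value $-t_\ell$. Each such value is nonzero because otherwise $f$ itself would have $m > (m+1)/2$ distinct roots, contradicting the factorization above. Hence $f$ is a good polynomial of degree $m = r+\delta-1$ taking nonzero constants on $L$ pairwise disjoint $m$-subsets, and the arithmetic conditions of Theorem \ref{thm:ConsAB}, namely $2 \leq \delta \leq s \leq r+\delta-2$ and $2 \leq r \leq k \leq Lr + s - \delta + 1$, translate literally into the assumptions of the corollary (using $r \geq (m+1)/2$, $r \leq m-1$, and $m > 2$). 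Plugging in yields length $n = L(r+\delta-1) + s = m\lfloor q/(2m)\rfloor + (m+1)/2$, the stated dimension range, and minimum distance $n - k - (\lceil (k+v)/r\rceil - 1)(\delta-1) + 1$ with $v = (m-1)/2$.

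The main obstacle is not located in the corollary itself but in the text preceding it, which already disposes of the genuinely non-trivial point: checking that the $(m-1)/2$ squared quantities $a(\omega^i-\omega^{-i})^2$ are distinct, lie in $\mathbb{F}_q$, and, in characteristic $2$, collapse to produce exactly $(m+1)/2$ roots of $f$ rather than $m$. The only remaining subtlety worth flagging in the proof is the verification that every constant $f(A_\ell) = -t_\ell$ is nonzero, which is essential because \textbf{Construction A} explicitly demands nonzero constants on the local repair sets; this is settled by the root-count contradiction described above.
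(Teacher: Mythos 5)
Your proposal is correct and takes the same route the paper intends: specialize \textbf{Construction A} (Theorem \ref{thm:ConsAB}) to $g(x)=f(x)=D_m(x,a)-D_m(x,0)$, using the paragraph before the corollary to certify that in characteristic $2$ the factorization \eqref{Dickson, factor} collapses the $(m-1)/2$ inseparable quadratics to double linear factors over $\mathbb{F}_q$, giving exactly $s=(m+1)/2$ distinct $\mathbb{F}_q$-roots, and using Lemma \ref{lem:ConsBlem2} (the ``otherwise'' branch) to supply $L=\lfloor q/(2m)\rfloor$ pairwise disjoint $m$-fibres. Your additional observation that every constant $f(A_\ell)$ is necessarily nonzero — since $f(A_\ell)=0$ would force $f$ to have $m>(m+1)/2$ distinct roots — is a point the paper leaves implicit but is indeed needed for Theorem \ref{thm:ConsAB} to apply. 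Your check that $m$ is automatically odd (because $m\mid q+1$ with $q$ even) and that the arithmetic conditions $2\le\delta\le s\le r+\delta-2$, $2\le r\le k\le Lr+s-\delta+1$ reduce to the stated hypotheses is also correct.

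One thing worth flagging explicitly: your derived minimum-distance formula
$d=n-k-\bigl(\lceil (k+(m-1)/2)/r\rceil-1\bigr)(\delta-1)+1$
is the one Theorem \ref{thm:ConsAB} actually yields (with $v=(m-1)/2$), whereas the statement of Corollary \ref{cor:ConsBcor2} as printed omits the ``$-1$'' inside the parenthesis. You silently use the correct formula; it would be better to note that the corollary as written contains a typographical error and that the ``$-1$'' should be restored (cf.\ the analogous Corollary \ref{cor:ConsBcor1}, where it is present).
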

There are also some other explicit good polynomials satisfying the condition of \textbf{Construction A}, such as \cite[Theorem 9]{chen2021good}. We have an explicit example as follows:
\begin{example}\label{exam:ConsBexam}
		By \cite[Theorem 9]{chen2021good}, assume that $g(x) = x^6+a_4x^4+a_2x^2$ is a polynomial over $\mathbb{F}_q$, where $q\equiv 3 \mod 6$ is a prime power, if $a_4^2=a_2\neq 0$ and $a_4$ is a square in $\mathbb{F}_q$ , then
		$G(g)=\frac{q-3}{6}$. 
		
		Since $3|q$, we have  $$g(x)=x^6+a_4x^4+a_2x^2=x^2(x^4+2\frac{a_4}{2}x^2+(\frac{a_4}{2})^2)=x^2(x^2+\frac{a_4}{2})^2=x^2(x^2-a_4)^2.$$
		Since $a_4$ is a non-zero square in $\mathbb{F}_q$, $g(x)$ is a good polynomial satisfies the condition of \textbf{Construction A}, in which the code length of $\mathcal{C}$ can be up to $6\frac{q-3}{6}+3=q$.
		
		Let $u$ be a root of the polynomial $x^3 + 2x + 1$ over $\mathbb{F}_3$. 
		We have a good polynomial $x^6+u^2x^4+u^4x^2=x^2(x-u)^2(x+u)^2\in \mathbb{F}_{27}$ that is constant on each of the following four sets. 
		$$A_1=\{u^2,u^4,u^{10},u^{15},u^{17},u^{23}\},g(A_1)=u^6,$$
		$$A_2=\{u^7,u^{11},u^{12},u^{20},u^{24},u^{25}\},g(A_2)=u^8,$$
		$$A_3=\{u^5,u^6,u^8,u^{18},u^{19},u^{21}\},g(A_3)=u^{12},$$
		$$A_4=\{u^3,u^9,u^{13},u^{16},u^{22},u^{26}\},g(A_4)=u^{24}.$$
		
		In \textbf{Construction A}, let $r=5,\delta=2,b_1=0,b_2=u,b_3=u^{14},g_1(x)=x(x-u)(x-u^{14})$ and $k=12$. 
		
		For the vector $\boldsymbol{I}{=}\{I_0,I_1,I_{0,1},I_{1,1},I_{2,1},I_{3,1},I_{4,1},I_{0,2},I_{1,2},I_{2,2},I_{3,2},I_{4,2}\}$, the evaluation polynomials are defined as follows.
		\begin{equation*}
			S_{\boldsymbol{I}}(x)=I_0+I_1x,\quad
			F_{\boldsymbol{I}}(x)=S_{\boldsymbol{I}}(x)g_1(x)+\sum_{i=0}^{4} \sum_{j=1}^{2} I_{i,j}x^ig(x)^j.
		\end{equation*}
	Then we have a generator matrix of this $(5,2)$-LRC as follows with the help of the computer algebra system Magma calculator \cite{bosma1997magma},
		\setlength{\arraycolsep}{1.5pt}
		\begin{align*}
			&\left[\begin{array}{ccc|cccccc|cccccc}
				1&1&1&u^{16}&u^{16}&u^{16}&u^3&u^3&u^3&u^4&u^4&u^4&u^{17}&u^{17}&u^{17}\\
				0&u&u^{14}&u^{18}&u^{20}&1&u^{18}&u^{20}&1&u^{11}&u^{15}&u^{16}&u^{11}&u^{15}&u^{16}\\
				0&0&0&u^{6}&u^{6}&u^{6}&u^{6}&u^{6}&u^{6}&u^{8}&u^{8}&u^{8}&u^{8}&u^{8}&u^{8}\\
				0&0&0&u^{8}&u^{10}&u^{16}&u^{21}&u^{23}&u^{3}&u^{15}&u^{19}&u^{20}&u^{2}&u^{6}&u^{7}\\
				0&0&0&u^{10}&u^{14}&1&u^{10}&u^{14}&1&u^{22}&u^{4}&u^{6}&u^{22}&u^{4}&u^{6}\\
				0&0&0&u^{12}&u^{18}&u^{10}&u^{25}&u^{5}&u^{23}&u^{3}&u^{15}&u^{18}&u^{16}&u^{2}&u^{5}\\
				0&0&0&u^{14}&u^{22}&u^{20}&u^{14}&u^{22}&u^{20}&u^{10}&1&u^{4}&u^{10}&1&u^{4}\\
				0&0&0&u^{12}&u^{12}&u^{12}&u^{12}&u^{12}&u^{12}&u^{16}&u^{16}&u^{16}&u^{16}&u^{16}&u^{16}\\
				0&0&0&u^{14}&u^{16}&u^{22}&u&u^{3}&u^{9}&u^{23}&u&u^{2}&u^{10}&u^{14}&u^{15}\\
				0&0&0&u^{16}&u^{20}&u^{6}&u^{16}&u^{20}&u^{6}&u^{4}&u^{12}&u^{14}&u^{4}&u^{12}&u^{14}\\
				0&0&0&u^{18}&u^{24}&u^{16}&u^{5}&u^{11}&u^{3}&u^{11}&u^{23}&1&u^{24}&u^{10}&2\\
				0&0&0&u^{20}&u^{2}&1&u^{20}&u^{2}&1&u^{18}&u^{8}&u^{12}&u^{18}&u^{8}&u^{12}\\
			\end{array}\sim\right.\\
			&\quad\quad\quad\quad\quad\quad\quad\quad\quad\quad\quad\quad\quad\quad\left. \sim \begin{array}{cccccc|cccccc}
				u^6&u^{19}&u^6&u^{19}&u^6&u^{19}&u^{12}&u^{25}&u^{12}&u^{25}&u^{12}&u^{25}\\
				u^{11}&u^{25}&u^{14}&u^{11}&u^{25}&u^{14}&u^{15}&u^{8}&u^{25}&u^{15}&u^{8}&u^{25}\\
				u^{12}&u^{12}&u^{12}&u^{12}&u^{12}&u^{12}&u^{24}&u^{24}&u^{24}&u^{24}&u^{24}&u^{24}\\
				u^{17}&u^{18}&u^{20}&u^{4}&u^{5}&u^{7}&u&u^{7}&u^{11}&u^{14}&u^{20}&u^{24}\\
				u^{22}&u^{24}&u^{2}&u^{22}&u^{24}&u^{2}&u^{4}&u^{16}&u^{24}&u^{4}&u^{16}&u^{24}\\
				u&u^{4}&u^{10}&u^{14}&u^{17}&u^{23}&u^{7}&u^{25}&u^{11}&u^{20}&u^{12}&u^{24}\\
				u^{6}&u^{10}&u^{18}&u^{6}&u^{10}&u^{18}&u^{10}&u^{8}&u^{24}&u^{10}&u^{8}&u^{24}\\
				u^{24}&u^{24}&u^{24}&u^{24}&u^{24}&u^{24}&u^{22}&u^{22}&u^{22}&u^{22}&u^{22}&u^{22}\\
				u^{3}&u^{4}&u^{6}&u^{16}&u^{17}&u^{19}&u^{25}&u^{5}&u^{9}&u^{12}&u^{18}&u^{22}\\
				u^{8}&u^{10}&u^{14}&u^{8}&u^{10}&u^{14}&u^{2}&u^{14}&u^{22}&u^{2}&u^{14}&u^{22}\\
				2&u^{16}&u^{22}&1&u^{3}&u^{9}&u^{5}&u^{23}&u^{9}&u^{18}&u^{10}&u^{22}\\
				u^{18}&u^{22}&u^{4}&u^{18}&u^{22}&u^{4}&u^{8}&u^{6}&u^{22}&u^{8}&u^{6}&u^{22}\\
			\end{array}\right]
		\end{align*}
		The minimum distance of this LRC is $14$. It is a Singleton-optimal LRC with locality $6$ whose code length is equal to the field size $27$.
	\end{example}
	\subsection{Beyond the field size}
	In the proof of Proposition \ref{prop:ConsAB}, we find that if $v=r-1$, i.e., $s=\delta$, then the evaluation points for $S_{\boldsymbol{I}}(x)$ is unnecessary since $S_{\boldsymbol{I}}(x)$ is a constant polynomial. Thus, we have the following \textbf{Construction B}. It is worth noting that the code length of $\mathcal{C}$ in it can exceed the field size $q$ for some good polynomials.
	\\ \textbf{Construction B}: Suppose that $g(x)=g_2(x)g_1(x)\in \mathbb{F}_q[x]$ is a good polynomial of degree $(r+\delta-1)$ that is constant on each of the pairwise disjoint sets $A_i=\{a_{i,j}|j=1,2,\dots,r+\delta-1\}\subseteq \mathbb{F}_q,i=1,2,\dots,L$ and $g(a)\neq 0$ for any $a\in \cup_{i=1}^{L}A_i$, where $\deg(g_2(x))=s=\delta$. 
	
	Let $v=\deg(g_1(x))=r-1$. Assume that $r\leq k\leq Lr+1$.
	
	For the vector $\boldsymbol{I}=(I_{0};{I_{i,j},i=0,1,\dots,r-1,j\in [\xi(i)]})\in \mathbb{F}_{q}^k$ (see \eqref{Infromation, I, form}), we define the evaluation polynomial $F_{\boldsymbol{I}}(x)$ by: 
	\begin{equation}\label{ConsC, F(x)}
		F_{\boldsymbol{I}}(x)=I_0g_1(x)+\sum_{i=0}^{r-1} \sum_{j=1}^{\xi(i)} I_{i,j}x^ig(x)^j.
	\end{equation}
	
	Then we define the code $\mathcal{C}$ by:
	$$\mathcal{C}=\{(\underbrace{I_0,I_0,\dots,I_0}_{\delta},F_{\boldsymbol{I}}(a_{i,j}),i=1,2,\dots,L,j=1,2,\dots,r+\delta-1)|\boldsymbol{I}\in \mathbb{F}_q^k\}.$$
	Its code length $n=L(r+\delta-1)+\delta=(L+1)(r+\delta-1)-(r-1)$. Now we determine the other parameters of this code.
	\begin{proposition}\label{prop:ConsC}
		The code $\mathcal{C}$ defined in \textbf{Construction B} is an $(r,\delta)$-LRC with code length $n=L(r+\delta-1)+\delta$, dimension $k$ and  minimum distance $d=n-k-(\lceil\frac{k+r-1}{r}\rceil-1)(\delta-1)+1$.
	\end{proposition}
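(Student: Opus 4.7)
The plan is to mimic the proof of Proposition \ref{prop:ConsAB}, but exploit two features unique to \textbf{Construction B}: the first $\delta$ coordinates form a repetition code of length $\delta$, and the factor $g_1(x)$ of $g(x)$ always divides $F_{\boldsymbol{I}}(x)$ regardless of $I_0$.

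First I would verify the $(r,\delta)$-locality. The set $A_{L+1}$ consisting of the first $\delta$ coordinates gives a punctured code equal to $\{(I_0,\ldots,I_0)\mid I_0\in\mathbb{F}_q\}$, which is either $\{\boldsymbol 0\}$ or has minimum distance $\delta$, and $|A_{L+1}|=\delta\le r+\delta-1$. For each $A_i$, $i\in[L]$, since $\deg(F_{\boldsymbol I}(x)|_{A_i})\le r-1$ by the construction, $F_{\boldsymbol{I}}(x)$ has either $r+\delta-1$ or at most $r-1$ roots in $A_i$, so $\mathcal{C}_{A_i}$ is $\{\boldsymbol 0\}$ or has minimum distance $\ge\delta$. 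Thus \textbf{Construction B} has $(r,\delta)$-locality.

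Next, set $k'=k+v=k+r-1$. By the same degree book-keeping as in the proof of Proposition \ref{prop:ConsAB} (splitting according to whether $r\mid k'$ or not), one obtains
\begin{equation*}
\deg(F_{\boldsymbol I}(x))\le k'+(\lceil k'/r\rceil-1)(\delta-1)-1.
\end{equation*}
The key observation is that $g_1(x)\mid g(x)$, hence $g_1(x)\mid F_{\boldsymbol I}(x)$ for every $\boldsymbol I$, because each term $I_{i,j}x^ig(x)^j$ is divisible by $g_1$ and so is $I_0g_1(x)$.

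Now I would bound $\Wt(\boldsymbol c_{\boldsymbol I})$ for an arbitrary non-zero $\boldsymbol I$ by splitting into two cases. If $I_0=0$, the first $\delta$ coordinates contribute $0$ and $F_{\boldsymbol I}(x)$ is a non-zero multiple of $g(x)$; since $g$ has no root in $\bigcup_{i=1}^L A_i$, $F_{\boldsymbol I}$ has at most $\deg(F_{\boldsymbol I})-(r+\delta-1)$ roots there, giving
\begin{equation*}
\Wt(\boldsymbol c_{\boldsymbol I})\ge L(r+\delta-1)-\bigl(k'+(\lceil k'/r\rceil-1)(\delta-1)-1-(r+\delta-1)\bigr)=n-k-(\lceil k'/r\rceil-1)(\delta-1)+1.
\end{equation*}
If $I_0\neq 0$, the first $\delta$ coordinates contribute exactly $\delta$, and since $g_1(x)\mid F_{\boldsymbol I}(x)$ and $g_1$ has no root in $\bigcup_{i=1}^L A_i$, the polynomial $F_{\boldsymbol I}/g_1$ has at most $\deg(F_{\boldsymbol I})-(r-1)$ roots there. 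Adding the contributions yields the same lower bound $n-k-(\lceil k'/r\rceil-1)(\delta-1)+1$.

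Since $k\le Lr+1$, this lower bound is at least $\delta\ge 1$, so the linear map $\boldsymbol I\mapsto\boldsymbol c_{\boldsymbol I}$ is injective and $\dim(\mathcal C)=k$. Finally, Proposition \ref{prop:MiniDistanceUpperBound} applied with $s=\delta$ and $v=r-1$ gives the matching upper bound $d\le n-k-(\lceil(k+r-1)/r\rceil-1)(\delta-1)+1$, completing the proof. The main obstacle is getting the two cases to produce the same lower bound; the trick is noticing that while the $I_0=0$ case loses a factor $g(x)$ of degree $r+\delta-1$ from $F_{\boldsymbol I}$, the $I_0\neq 0$ case compensates by gaining $\delta$ weight from the repeated coordinates and only loses $g_1(x)$ of degree $r-1$, so the two balance to $n-k-(\lceil k'/r\rceil-1)(\delta-1)+1$.
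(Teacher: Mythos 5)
Your proof is correct and follows essentially the same approach as the paper's: verify $(r,\delta)$-locality via the repetition block and the degree restriction on $A_i$, split on $I_0=0$ versus $I_0\neq 0$ to get matching lower bounds on codeword weight using divisibility by $g$ respectively $g_1$, and then invoke Proposition \ref{prop:MiniDistanceUpperBound} with $s=\delta$, $v=r-1$ for the matching upper bound. The paper's proof is the same argument with the same case split and the same bookkeeping.
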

	%
	%
	\begin{proof}
		First we prove that $\mathcal{C}$ has $(r,\delta)$-locality.
		We denote the coordinate positions of the $\delta$ copies of $I_0$ by $b_1^{(1)},b_1^{(2)},\dots,b_{1}^{(\delta)}$. Let $A_{L+1}=\{b_1^{(1)},b_1^{(2)},\dots,b_{1}^{(\delta)}\}$.
		For the other $L(r+\delta-1)$ coordinate positions, we use the evaluation points to denote their corresponding coordinate positions by a little abuse of notation.
		
		It is easy to verify that $\mathcal{C}$ is a linear code. We now show that $\mathcal{C}$ has $(r,\delta)$-locality. We have $\D(\mathcal{C}_{A_{L+1}})\geq \delta$ or $\mathcal{C}_{A_{L+1}}=\{\boldsymbol{0}\}$ since the values at these $\delta$ coordinate positions are all the same, i.e., $I_0$. Since $\deg(F_{\boldsymbol{I}}(x)|_{A_i})\allowbreak\leq r-1$ for any $i\in [L]$, $F_{\boldsymbol{I}}(x)$ has either $r+\delta-1$ or $\leq r-1$ zeros in the $(r+\delta-1)$-set $A_{i}$. 
		%
		So we have $\D(\mathcal{C}|_{A_{i}})\geq \delta$ or $\mathcal{C}|_{A_{i}}=\{\boldsymbol{0}\}$ for any $i\in [L+1]$. $\mathcal{C}$ has $(r,\delta)$-locality.
		
		Recall that $k'=k+v=k+r-1$. Next we prove that the dimension of $\mathcal{C}$ is $k$ and minimum distance is $d=n-k-\lceil\frac{k+r-1}{r}\rceil(\delta-1)$.

		Assume that $\boldsymbol{I}\in \mathbb{F}_q^k$ is a non-zero vector and $\boldsymbol{c}_{\boldsymbol{I}}$ is its corresponding codeword. 
		We have $$r-1\leq \deg(F_{\boldsymbol{I}}(x))\leq k'+\lceil\frac{k+r-1}{r}-1\rceil(\delta-1)-1$$ by the same argument of \eqref{F(x) degree upper bound} in the proof of Proposition \ref{prop:ConsAB}. To determine the lower bound of $\Wt(\boldsymbol{c}_{\boldsymbol{I}})$, we consider the following two cases.
		\begin{enumerate}[-]
			\item
			If $I_0=0$, then $g(x)|F_{\boldsymbol{I}}(x)$, $F_{\boldsymbol{I}}(x)=g(x)\frac{F_{\boldsymbol{I}}(x)}{g(x)}$. Note that $g(a)\neq 0$ for any $a\in \cup_{i=1}^{L}A_i$. $F_{\boldsymbol{I}}(x)$ has at most $k'+(\lceil\frac{k+r-1}{r}\rceil-1)(\delta-1)-1-(r+\delta-1)$ roots in $\cup_{i=1}^{L}A_i$ since $0\leq \deg(\frac{F_{\boldsymbol{I}}(x)}{g(x)})\leq k'+(\lceil\frac{k+r-1}{r}\rceil-1)(\delta-1)-1-(r+\delta-1)$. Hence, we have 
			\begin{align*}
				\Wt(\boldsymbol{c}_{\boldsymbol{I}})&\geq  n-\delta-\bigg(k+(\lceil\frac{k+r-1}{r}\rceil-1)(\delta-1)-1+v-(r+\delta-1)\bigg)\\&=n-k-(\lceil\frac{k+r-1}{r}\rceil-1)(\delta-1)+1.
			\end{align*}
			
			\item
			If $I_0\neq 0$, then $g_1(x)|F_{\boldsymbol{I}}(x)$, $F_{\boldsymbol{I}}(x)=g_1(x)\frac{F_{\boldsymbol{I}}(x)}{g_1(x)}$. Note that $g_1(a)\neq 0$ for any $a\in \cup_{i=1}^{L}A_i$. $F_{\boldsymbol{I}}(x)$ has at most $k'+(\lceil\frac{k+r-1}{r}\rceil-1)(\delta-1)-1-(r-1)$ roots in $\cup_{i=1}^{L}A_i$ since $0\leq \deg(\frac{F_{\boldsymbol{I}}(x)}{g_1(x)}) \leq k'+(\lceil\frac{k+r-1}{r}\rceil-1)(\delta-1)-1-(r-1)$. Hence, we have 
			\begin{align*}
				\Wt(\boldsymbol{c}_{\boldsymbol{I}})&\geq  n-\delta-\bigg(k+(\lceil\frac{k+r-1}{r}\rceil-1)(\delta-1)-1+v-(r-1)\bigg)+\delta\\&=n-k-(\lceil\frac{k+r-1}{r}\rceil-1)(\delta-1)+1.
			\end{align*}
		\end{enumerate}
		Since $k\leq Lr+1$, we have $\Wt(\boldsymbol{c}_{\boldsymbol{I}})\geq n-k-(\lceil\frac{k+r-1}{r}\rceil-1)(\delta-1)+1\geq \delta$. It is easy to see that the map $\boldsymbol{I}\mapsto \boldsymbol{c}_{\boldsymbol{I}}$ is linear. This map is also an injective since its kernel only contains $\boldsymbol{0}$. Hence, $\mathcal{C}$ has dimension $k$ and minimum distance $d= n-k-(\lceil\frac{k+r-1}{r}\rceil-1)(\delta-1)+1$ by Proposition \ref{prop:MiniDistanceUpperBound}.
	\end{proof}
	Now we can summarize \textbf{Construction B} by the following theorem.
	\begin{theorem}\label{thm:ConsC}
		Suppose $2\leq \delta$, $2\leq r\leq k\leq Lr+1$ and $g(x)=g_2(x)g_1(x)\in \mathbb{F}_q[x]$ is a good polynomial of degree $r+\delta-1$ that takes non-zero constants on $L$ disjoint $(r+\delta-1)$-sets, where $g_1(x)$ and $g_2(x)$ has degrees $r-1$ and $\delta$, respectively, then there explicitly exists a distance-optimal $(r,\delta)$-LRC with parameters $$[n=L(r+\delta-1)+\delta,k,d=n-k-(\lceil\frac{k+r-1}{r}\rceil-1)(\delta-1)+1]_q.$$
	\end{theorem}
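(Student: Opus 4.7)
The plan is straightforward since Theorem \ref{thm:ConsC} is a packaged restatement of the content already assembled in Construction B and Proposition \ref{prop:ConsC}. The idea is to produce the asserted $(r,\delta)$-LRC explicitly via Construction B and then read off its parameters from Proposition \ref{prop:ConsC}. Concretely, given the factorization $g(x)=g_2(x)g_1(x)$ with $\deg(g_2)=\delta$ and $\deg(g_1)=r-1$, one sets $v=r-1$, forms the evaluation polynomial $F_{\boldsymbol{I}}(x)=I_0 g_1(x)+\sum_{i=0}^{r-1}\sum_{j=1}^{\xi(i)} I_{i,j}\, x^i g(x)^j$ as in equation \eqref{ConsC, F(x)}, and defines the codeword of $\boldsymbol{I}\in\mathbb{F}_q^k$ to consist of $\delta$ repetitions of $I_0$ concatenated with the evaluations $F_{\boldsymbol{I}}(a_{i,j})$ for $i\in[L]$ and $j\in[r+\delta-1]$. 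Counting coordinates immediately yields $n=L(r+\delta-1)+\delta$.

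The next step is to check that the hypotheses of the theorem match the hypotheses of Construction B verbatim, namely $\deg(g_1)=r-1$, $\deg(g_2)=\delta$, the non-vanishing of $g$ on $\cup_{i=1}^{L} A_i$ (which is part of "takes non-zero constants on $L$ disjoint $(r+\delta-1)$-sets"), and the range $r\le k\le Lr+1$. Once this is observed, Proposition \ref{prop:ConsC} applies directly and delivers the claimed dimension $k$ and minimum distance $d=n-k-(\lceil (k+r-1)/r\rceil-1)(\delta-1)+1$.

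Finally, distance-optimality is already built into Proposition \ref{prop:ConsC}: the matching upper bound comes from Proposition \ref{prop:MiniDistanceUpperBound} applied with $s=\delta$ and $v=r-1$, using the $L$ local repair groups $A_1,\dots,A_L$ of size $r+\delta-1$ together with the set of $\delta$ repeated coordinates, which is a legitimate local repair group because the length-$\delta$ repetition code has minimum distance $\delta$. Since the statement is a direct packaging of prior material, there is no real obstacle; the only item that warrants explicit mention in the write-up is the verification that Construction B is genuinely applicable under the theorem's hypotheses, after which the theorem follows by invoking Proposition \ref{prop:ConsC}.
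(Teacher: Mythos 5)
Your proposal is correct and matches the paper exactly: the theorem is stated as a summary of \textbf{Construction B}, and the paper likewise proves it by invoking Proposition \ref{prop:ConsC} (whose distance-optimality step rests on Proposition \ref{prop:MiniDistanceUpperBound}), with no additional argument beyond matching the hypotheses to Construction B.
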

	Let us review some known good polynomials that satisfy the condition of \textbf{Construction B}.
	\begin{lemma}[{{\cite[Proposition 3.2]{tamo2014family}}}]\label{lem:ConsClem1}
		If $H$ is a subgroup of $\mathbb{F}_q^*$ or $\mathbb{F}_q^+$, the polynomial $g(x)=\prod_{h\in H}(x-h)$ is constant on each coset of $H$.
	\end{lemma}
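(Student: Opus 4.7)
The plan is to treat the additive and multiplicative cases separately, in each case exploiting that left/additive translation by an element of $H$ permutes $H$, so that a suitable change of index in the defining product leaves it unchanged.

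For the additive case, I would fix a coset $a+H$ and pick an arbitrary element $x=a+h_0$ with $h_0\in H$. Substituting into the product gives
\begin{equation*}
g(x)=\prod_{h\in H}(a+h_0-h).
\end{equation*}
Because $H$ is an additive subgroup, the map $h\mapsto h-h_0$ is a bijection of $H$ onto itself. Reindexing the product by $h'=h-h_0$ yields $g(x)=\prod_{h'\in H}(a-h')$, which depends only on the coset representative's coset, not on the choice of $x$. Hence $g$ is constant on $a+H$.

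For the multiplicative case, I would fix a coset $aH$ and write an arbitrary element as $x=ah_0$ with $h_0\in H$. Then
\begin{equation*}
g(x)=\prod_{h\in H}(ah_0-h)=h_0^{|H|}\prod_{h\in H}\bigl(a-h/h_0\bigr).
\end{equation*}
Now the map $h\mapsto h/h_0$ is a bijection of $H$ onto itself since $H$ is a multiplicative subgroup, so the product over $h$ becomes $\prod_{h'\in H}(a-h')$, which depends only on the coset $aH$. The remaining issue is the scalar factor $h_0^{|H|}$: this is where I expect the only subtlety, and it is resolved by Lagrange's theorem applied to the finite group $H$, which gives $h_0^{|H|}=1$ for every $h_0\in H$. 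Thus $g(x)=\prod_{h\in H}(a-h)$ on all of $aH$, finishing the proof.

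The whole argument is a direct reindexing of the product; the only step that is not purely formal is verifying $h_0^{|H|}=1$ in the multiplicative case, so that is the main (and quite mild) obstacle.
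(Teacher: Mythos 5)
Your proof is correct. The paper itself does not re-prove this lemma — it is imported verbatim as \cite[Proposition 3.2]{tamo2014family} — so there is no in-paper argument to compare against; your reindexing argument (translation by $h_0$ permutes $H$, and in the multiplicative case the extracted factor $h_0^{|H|}$ equals $1$ by Lagrange) is exactly the standard proof and is sound. Incidentally, the paragraph in the paper immediately following the lemma gives an alternative way to see the multiplicative case: since $\prod_{h\in H}(x-h)$ and $x^{|H|}-1$ are monic of degree $|H|$ and agree on all of $H$, they are equal, and then $g(ah_0)=(ah_0)^{|H|}-1=a^{|H|}-1$ is visibly independent of $h_0$; your approach is more elementary and treats the two cases uniformly, while the explicit-form approach additionally identifies the constant.
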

	Let $H$ be a subgroup of $\mathbb{F}_q^*$. For any $\alpha\in H$, $\alpha^{|H|}=1$ and $\prod_{h\in H}(\alpha-h)=0$. Since $\prod_{h\in H}(x-h),x^{|H|}$ are both monic and of degree $|H|$, we have $1+\prod_{h\in H}(x-h)=x^{|H|}$, which means that $x^{|H|}$ takes non-zero constants on the cosets of $H$. Thus, we have the following result.
	 
	\begin{corollary}\label{cor:ConsCcor1}
		If $r\geq 2$ and $\delta\geq 2$, then $g(x)=x^{r+\delta-1}=x^{\delta}\cdot x^{r-1}$ is a good polynomial with $G(g)=\frac{q-1}{r+\delta-1}$ in the field $\mathbb{F}_q$, where $(r+\delta-1)|(q-1)$. We have a distance-optimal $(r,\delta)$-LRC whose length is up to $n=q-1+\delta$ by \textbf{Construction B}. Its minimum distance is $n-k-(\lceil\frac{k+r-1}{r}\rceil-1)(\delta-1)+1$, where $k\in [r,(q-1)r/(r+\delta-1)+1]$ is the dimension of this $(r,\delta)$-LRC.
	\end{corollary}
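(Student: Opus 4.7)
The plan is to verify that the monomial $g(x)=x^{r+\delta-1}$ fits the hypotheses of \textbf{Construction B} with an optimal choice of cosets, then simply read off the parameters from Theorem \ref{thm:ConsC}.

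First I would establish that $g(x)=x^{r+\delta-1}$ is a good polynomial with $G(g)=(q-1)/(r+\delta-1)$. Because $(r+\delta-1)\mid (q-1)$, the cyclic group $\mathbb{F}_q^*$ contains a unique subgroup $H$ of order $r+\delta-1$, and the $L:=(q-1)/(r+\delta-1)$ cosets of $H$ partition $\mathbb{F}_q^*$ into pairwise disjoint $(r+\delta-1)$-sets $A_1,\dots,A_L$. By Lemma \ref{lem:ConsClem1}, $\prod_{h\in H}(x-h)$ is constant on each coset of $H$, and the identity $1+\prod_{h\in H}(x-h)=x^{|H|}$ recalled in the paragraph preceding the corollary shows that $x^{r+\delta-1}$ also takes a (non-zero, since $0\notin A_i$) constant on each $A_i$. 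This gives $G(g)\geq L$; the reverse inequality is immediate because the $L$ cosets already exhaust $\mathbb{F}_q^*$ and $g(0)=0$.

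Next I would factor $g(x)=g_2(x)g_1(x)$ with $g_2(x)=x^{\delta}$ of degree $\delta$ and $g_1(x)=x^{r-1}$ of degree $r-1$, which is exactly the factorisation demanded in the hypothesis of Theorem \ref{thm:ConsC}. Since $g(a)\neq 0$ for every $a\in\bigcup_{i=1}^{L}A_i\subseteq \mathbb{F}_q^{*}$, and since $r\geq 2$, $\delta\geq 2$ are given, the hypotheses $2\leq \delta$, $2\leq r\leq k\leq Lr+1$ of Theorem \ref{thm:ConsC} are met for any $k$ in the stated range $[r,(q-1)r/(r+\delta-1)+1]$.

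Finally I would invoke Theorem \ref{thm:ConsC} to conclude that \textbf{Construction B} produces a distance-optimal $(r,\delta)$-LRC of length
\[
n=L(r+\delta-1)+\delta=(q-1)+\delta=q-1+\delta,
\]
dimension $k$, and minimum distance $d=n-k-(\lceil (k+r-1)/r\rceil-1)(\delta-1)+1$, matching the statement of the corollary. There is really no serious obstacle: the only points that need a moment's care are the existence of the subgroup $H$ (guaranteed by $(r+\delta-1)\mid (q-1)$ and the cyclicity of $\mathbb{F}_q^{*}$), and the fact that $g$ vanishes off $\mathbb{F}_q^{*}$, which is why the $L$ cosets of $H$ are the \emph{only} admissible blocks and why the upper bound $n\leq q-1+\delta$ is achieved precisely when one takes every coset of $H$ as a local repair group.
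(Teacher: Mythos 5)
Your proposal is correct and follows essentially the same route as the paper: take the unique subgroup $H\leq\mathbb{F}_q^*$ of order $r+\delta-1$, use the identity $1+\prod_{h\in H}(x-h)=x^{|H|}$ to see that $x^{r+\delta-1}$ is a non-zero constant on each of the $L=(q-1)/(r+\delta-1)$ cosets of $H$, factor $g(x)=x^{\delta}\cdot x^{r-1}$, and invoke Theorem \ref{thm:ConsC}. Your added remarks on the reverse inequality $G(g)\leq L$ (from $g(0)=0$ and the cosets exhausting $\mathbb{F}_q^*$) are correct and slightly more explicit than the paper's terse paragraph, but they do not change the underlying argument.
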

	
	\begin{lemma}[{{\cite[Theorem 3]{chen2021good}}}]\label{lem:ConsClem2}
		Consider a cubic polynomial $f(x)=x^3+a_1x$ over $\mathbb{F}_q$. If $q\equiv 5\mod 6$, then
		$G(f)\leq \frac{q+1}{6}$,
		and the equality holds if and only if $a_1$ is a non-zero square in $\mathbb{F}_q$.
	\end{lemma}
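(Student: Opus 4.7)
The plan is to reduce computing $G(f) = N_f(3)$ to a quadratic-form count via Vieta's relations. For $t \in \mathbb{F}_q$, the cubic $f(x) + t = x^3 + a_1 x + t$ has three distinct roots $\alpha, \beta, \gamma \in \mathbb{F}_q$ iff $\alpha + \beta + \gamma = 0$, $\alpha\beta + \alpha\gamma + \beta\gamma = a_1$, and the three elements are pairwise distinct (with $t = -\alpha\beta\gamma$ then determined). Substituting $\gamma = -\alpha - \beta$ into the second relation reduces this to the single equation $\alpha^2 + \alpha\beta + \beta^2 = -a_1$. Since each unordered triple of distinct roots yields $3! = 6$ ordered pairs $(\alpha,\beta)$, one obtains $6\, N_f(3) = M(-a_1) - B$, where $M(c) = |\{(\alpha,\beta) \in \mathbb{F}_q^2 : \alpha^2 + \alpha\beta + \beta^2 = c\}|$ and $B$ counts the ``degenerate'' pairs in which at least one of the coincidences $\alpha = \beta$, $\alpha = \gamma$, $\beta = \gamma$ occurs.

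The next step is to compute $M(c)$ via a Gauss-sum argument. Completing the square gives $\alpha^2 + \alpha\beta + \beta^2 = (\alpha + \beta/2)^2 + (3/4)\beta^2$, so the form is anisotropic exactly when $-3$ is a non-square in $\mathbb{F}_q$. The hypothesis $q \equiv 5 \bmod 6$ forces $q = p^s$ with $p \equiv 5 \bmod 6$ and $s$ odd; quadratic reciprocity gives $\bigl(\tfrac{-3}{p}\bigr) = \bigl(\tfrac{p}{3}\bigr) = -1$, and non-squares of $\mathbb{F}_p$ persist under odd-degree extensions, so $-3$ is a non-square in $\mathbb{F}_q$. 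Applying the identity $\sum_x \psi(ax^2) = \chi(a)\,g$ (with $\chi$ the quadratic character, $\psi$ a non-trivial additive character, and $g^2 = \chi(-1)q$) together with the orthogonality relation $[y=c] = \frac{1}{q}\sum_\lambda \psi(\lambda(y-c))$ yields, after a routine separation of the $\lambda = 0$ term, $M(c) = q - \chi(-3) = q + 1$ for $c \neq 0$, and $M(0) = 1$.

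For the degenerate count, each of the three coincidence conditions collapses to an equation of the form $3\alpha^2 = -a_1$ (setting $\beta = \alpha$, $\beta = -2\alpha$, or $\alpha = -2\beta$ respectively), and any two of them together force $3\alpha = 0$, hence $\alpha = 0$ and so $a_1 = 0$, where $3 \nmid q$ is used. Thus for $a_1 \neq 0$ the three conditions are pairwise disjoint and $B = 3\,|\{\alpha : \alpha^2 = -a_1/3\}|$. Since $-1/3 = -3 \cdot (1/3)^2$ lies in the same square class as $-3$, it is a non-square, so $-a_1/3$ is a nonzero square precisely when $a_1$ is a non-square. One therefore gets $B = 6$ when $a_1$ is a nonzero non-square, giving $N_f(3) = (q-5)/6$; $B = 0$ when $a_1$ is a nonzero square, giving $N_f(3) = (q+1)/6$; and $a_1 = 0$ forces $f(x) + t = x^3 + t$, which has a unique root since $\gcd(3, q-1) = 1$, so $N_f(3) = 0$. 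Combining these yields $G(f) \le (q+1)/6$ with equality exactly when $a_1$ is a nonzero square.

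The main obstacle will be the character-sum evaluation of $M(c)$, together with the square-class bookkeeping showing that $-3$ is a non-square in $\mathbb{F}_q$ under the hypothesis $q \equiv 5 \bmod 6$; once this is in place, the combinatorial deduction is routine, with the only delicate point being pairwise disjointness of the degeneracy cases, which relies essentially on $\mathrm{char}(\mathbb{F}_q) \neq 3$.
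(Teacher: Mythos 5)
The paper proves this lemma by citation to \cite[Theorem 3]{chen2021good} and gives no argument of its own, so there is no in-paper proof to compare against. Your proof is self-contained and correct, and I will record that.

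Your reduction via Vieta's formulas is sound: three distinct roots of $x^3 + a_1x + t$ in $\mathbb{F}_q$ are in bijection (up to a factor of $6$ for the ordered pairs $(\alpha,\beta)$) with ordered pairs satisfying $\alpha^2 + \alpha\beta + \beta^2 = -a_1$ together with the three pairwise-distinctness conditions. The quadratic-form count is right: completing the square shows the form is equivalent to $u^2 - (-3/4)v^2$, hence anisotropic precisely when $-3$ is a non-square, which under $q \equiv 5 \bmod 6$ follows from $q \equiv 2 \bmod 3$ (equivalently, $\mathbb{F}_q$ contains no primitive cube root of unity); the resulting count $M(c) = q - \chi(-3) = q+1$ for $c\neq 0$ and $M(0)=1$ is the standard anisotropic-conic count. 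The degeneracy analysis is also correct: each coincidence collapses to $3x^2 = -a_1$, the three cases are pairwise disjoint for $a_1 \neq 0$ since any overlap would force $3\alpha = 0$, and the square-class computation ($-1/3$ is in the square class of $-3$, hence a non-square) correctly yields $B = 0$ when $a_1$ is a nonzero square and $B = 6$ when $a_1$ is a nonzero non-square. The endpoint cases $(q+1)/6$, $(q-5)/6$, and $0$ (the last because $x\mapsto x^3$ is a bijection when $\gcd(3,q-1)=1$) then give exactly the stated bound and equality condition. The only thing worth flagging is that the deduction ``$p \equiv 5 \bmod 6$ and $s$ odd'' should really be phrased as $p\equiv 2\bmod 3$ (and $p$ odd) with $s$ odd, which is what $q\equiv 5\bmod 6$ actually forces; this is a cosmetic point and does not affect the argument, since all you need is $q\equiv 2\bmod 3$ to conclude $\chi(-3)=-1$.
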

	By Lemma \ref{lem:ConsClem2}, there are exactly $\frac{q+1}{6}$ pairwise disjoint $3$-subsets $A_i\subseteq \mathbb{F}_q$, $i\in [\frac{q+1}{6}]$, such that $f(x)=x^3+a_1x$ is constant on each of them, where $q\equiv 5\mod 6$ and $a_1$ is a non-zero square. Let $x_0\in \mathbb{F}_q\backslash \cup_{i=1}^{(q+1)/6}A_i$. Then $g(x)=f(x)-f(x_0)$ is a good polynomial with $G(g)=G(f)$ and is constant on $A_i$, $i\in [\frac{q+1}{6}]$.  $g(x)=(x-x_0)\frac{g(x)}{x-x_0}$ has no root in $\cup_{i=1}^{(q+1)/6}A_i$ since $x_0\not\in \cup_{i=1}^{(q+1)/6}A_i$. By \textbf{Construction B}, we have the following result.
	\begin{corollary}\label{cor:ConsCcor2}
		If $q\equiv 5 \mod 6$, then we have a distance-optimal $(r,2)$-LRC by Lemma \ref{lem:ConsClem2} and \textbf{Construction B}. Its length is $n=\frac{q+1}{2}+2$ and minimum distance is $n-k-\lceil\frac{k+1}{2}\rceil+2$, where $k\in [2,(q+1)/3+1]$ is the dimension of this code.
	\end{corollary}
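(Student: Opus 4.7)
The plan is to invoke Lemma \ref{lem:ConsClem2} to produce a good polynomial with the right multiplicative structure and then feed it into Theorem \ref{thm:ConsC} with $r=\delta=2$. First I would fix $a_1$ to be a nonzero square in $\mathbb{F}_q$, which by Lemma \ref{lem:ConsClem2} forces $G(f)=(q+1)/6$ for $f(x)=x^3+a_1 x$. This guarantees the existence of exactly $L=(q+1)/6$ pairwise disjoint $3$-subsets $A_1,\dots,A_L\subseteq\mathbb{F}_q$ on each of which $f$ is constant.

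Next I would choose an element $x_0\in\mathbb{F}_q\setminus\bigcup_{i=1}^{L}A_i$; this is possible because $|\bigcup_i A_i|=3L=(q+1)/2<q$. Set $g(x)=f(x)-f(x_0)$. Since $g$ and $f$ differ by a constant, $g$ is still constant on each $A_i$, and the values $g(A_i)=f(A_i)-f(x_0)$ are nonzero: indeed, $f$ takes the value $f(A_i)$ exactly three times (on $A_i$) while $x_0\notin A_i$, so $f(x_0)\ne f(A_i)$. Because $g(x_0)=0$, we may factor $g(x)=(x-x_0)\,g_2(x)$ with $\deg g_2=2$, and the factor $g_1(x):=x-x_0$ has degree $r-1=1$. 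In particular $g_1$ has no zero in $\bigcup_i A_i$, which is precisely what Construction B requires.

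With this data, Theorem \ref{thm:ConsC} applies with $r=\delta=2$, $L=(q+1)/6$, and produces a distance-optimal $(r,\delta)$-LRC of length
\[
n = L(r+\delta-1)+\delta = 3\cdot\tfrac{q+1}{6}+2 = \tfrac{q+1}{2}+2,
\]
and minimum distance
\[
d = n-k-\Bigl(\Bigl\lceil\tfrac{k+r-1}{r}\Bigr\rceil-1\Bigr)(\delta-1)+1 = n-k-\Bigl\lceil\tfrac{k+1}{2}\Bigr\rceil+2,
\]
valid for $k\in[r,Lr+1]=[2,(q+1)/3+1]$, matching the claimed parameters.

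The only nontrivial point is verifying $g(a)\ne 0$ for every $a\in\bigcup_i A_i$; this boils down to the observation that $x_0$ is not in any $A_i$, which was arranged by the choice of $x_0$. Everything else is a direct specialization of Construction B, so I do not expect any genuine obstacle.
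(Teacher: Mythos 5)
Your proof is correct and follows the same route as the paper: use Lemma \ref{lem:ConsClem2} to obtain $(q+1)/6$ disjoint $3$-subsets, shift $f$ by $f(x_0)$ for an $x_0$ outside their union to make $g$ vanish at $x_0$ while remaining nonzero on the $A_i$, peel off $g_1(x)=x-x_0$, and invoke Theorem \ref{thm:ConsC} with $r=\delta=2$. Your explicit observation that $f(x_0)\neq f(A_i)$ (because $f$, being cubic, already attains $f(A_i)$ three times on $A_i$) is slightly more detailed than the paper's terse "since $x_0\notin\cup A_i$," but the underlying idea is identical.
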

	We have an explicit example by Corollary \ref{cor:ConsCcor1}.
	\begin{example}\label{exam:ConsCexam}
		If $q=17, g_1(x)=x, g_2(x)=x^3, r=2, \delta=3, v=1, k=7$, then for the information vector $\boldsymbol{I}=(I_0;I_{0, 1}, I_{1, 1}, I_{0, 2}, I_{1, 2}, I_{0, 3}, I_{1, 3})$, 
		the evaluation polynomial is:
		$$F_{\boldsymbol{I}}(x)=I_0g_1(x)+\sum_{i=0}^{1} \sum_{j=1}^{3} I_{i, j}x^ig(x)^j. $$
		Then the code $\mathcal{C}$ has the following generator matrix:
		\setlength{\arraycolsep}{4pt}
		\begin{equation*}
			\left[\begin{array}{ccc|cccc|cccc|cccc|cccc}
				1&1&1&1&4&13&16&6&7&10&11&3&5&12&14&2&8&9&15\\
				0&0&0&1&1&1&1&4&4&4&4&13&13&13&13&16&16&16&16\\
				0&0&0&1&4&13&16&7&11&6&10&5&14&3&12&15&9&8&2\\
				0&0&0&1&1&1&1&16&16&16&16&16&16&16&16&1&1&1&1\\
				0&0&0&1&4&13&16&11&10&7&6&14&12&5&3&2&8&9&15\\
				0&0&0&1&1&1&1&13&13&13&13&4&4&4&4&16&16&16&16\\
				0&0&0&1&4&13&16&10&6&11&7&12&3&14&5&15&9&8&2\\
			\end{array}\right]
		\end{equation*}
		It is a Singleton-optimal $(2, 3)$-LRC with parameters $[19, 7, 7]_{17}$.  
		
		If $q=49, g(x)=x^{24}$, $g_1(x)=x^{5},g_2(x)=x^{19}$,$r=6,\delta=19,v=5,k=7,$ then we have a  Singleton-optimal $(6,19)$-LRC with parameters $[67,7,43]_{49}$, which is longer than that constructed via elliptic curves whose code length is at most $63$ in $\mathbb{F}_{49}$.
	\end{example}
	\section{Conclusions}\label{sec:4Conc}
	In this paper, by generalizing the RS-like LRCs in \cite{tamo2014family} and \cite{kolosov2018optimal}, we construct new $(r,\delta)$-LRCs via some good polynomials of special forms. They attain the Singleton-type bound if the dimension $k$ satisfies $1\leq (k\mod r)\leq \big((n+r)\mod (r+\delta-1)\big)$. For other cases of dimension $k$, we also show that they are distance-optimal by proving a sharper bound under some extra constraints about the local repair groups.
	
	The code length of our constructions can be larger than that in \cite{tamo2014family} since we make use of some smaller set with size less than $r+\delta-1$. Our main innovation is to use two evaluation polynomials, which allows the roots of a good polynomial to be the extra evaluation points. In particular, when one of the evaluation polynomials is a constant polynomial, it no longer needs any evaluation point, making it possible that the code length exceeds the field size $q$. The largest code length of these codes is $q+\delta-1$, asymptotically exceeding the code length $q+2\sqrt{q}$ of that constructed via elliptic curves when $\delta$ is proportional
	to $q$. 


\section*{Declaration of competing interest}
\noindent
The authors declare that they have no known competing financial interests or personal relationships that could have appeared to influence the work reported in this paper.
\section*{Data availability}
\noindent
No data was used for the research described in the article.
\section*{Acknowledgement}
\noindent
This work is supported in part by Science and Technology Commission of Shanghai Municipality (No. 22DZ2229014).

 \bibliographystyle{elsarticle-num}
 \bibliography{GaoYuanREF.bib}
\end{document}